\newcommand{\R}{\mathbb{R}}
\DeclareMathOperator{\rank}{rank}
\DeclareMathOperator{\diag}{diag}
\newtheorem{assumption}{Assumption}
\newtheorem{theorem}{Theorem}
\newtheorem{definition}{Definition}
\newtheorem{remark}{Remark}
\newtheorem{lemma}{Lemma}
\newtheorem{problem}{Problem}
\newif\ifcomments
\title{\LARGE \bf
Robust Data-Driven Control for Nonlinear Systems\\Using their Digital Twins and Quadratic Funnels\thanks{This work has been supported by NSF Award EPCN 2149470 and the 2024-1.2.3-HU-RIZONT-2024-00030 project.}
}
\author{Shiva Shakeri and Mehran Mesbahi$^{1}$
\thanks{$^{1}$The authors are with the William E. Boeing Department of Aeronautics and Astronautics at the University of Washington; Emails: {\tt \{sshakeri+mesbahi\}@uw.edu}.}
}
\begin{document}

\maketitle
\thispagestyle{empty}
\pagestyle{empty}

\begin{abstract}
This paper examines a robust data-driven approach for the safe deployment of systems with nonlinear dynamics using their imperfect digital twins. Our contribution involves proposing a method that fuses the digital twin's nominal trajectory with online, data-driven uncertainty quantification to synthesize robust tracking controllers. Specifically, we derive data-driven bounds to capture the deviations of the actual system from its prescribed nominal trajectory informed via its digital twin. Subsequently, the dataset is used in the synthesis of quadratic funnels—robust positive invariant tubes around the nominal trajectory—via linear matrix inequalities built on the time-series data. The resulting controller guarantees constraint satisfaction while adapting to the true system behavior through a segmented learning strategy, where each segment's controller is synthesized using uncertainty information from the previous segment. This work establishes a systematic framework for obtaining safety certificates in learning-based control of nonlinear systems with imperfect models.
\end{abstract}
\begin{keywords}
Data-driven control, funnel synthesis, linear matrix inequalities
\end{keywords}

\section{Introduction}\label{sec:intro}
In recent years, direct data-driven control has attracted significant interest, particularly in the context of using finite trajectory data to make online control decisions while bypassing explicit model identification \cite{Hou2013FromMC}. A central question is how to \emph{represent unknown dynamics from (offline and online) data} so that prediction and decision-making can be posed directly on measured trajectories. In the linear time-invariant (LTI) setting, Willems’ Fundamental Lemma \cite{Willems2005FundamentalLemma} shows that, under a persistently exciting experiment, one can build a Hankel matrix whose column space spans all trajectories of a given length, enabling data-driven prediction and constraint handling (e.g., DeePC \cite{Coulson2019DeePC} and data-driven tracking MPC \cite{Berberich2022LTMPCCD}). For noisy data, a matrix S-lemma reduces data-consistency and performance quadratic inequalities to non-conservative linear matrix inequalities \cite{Boyd1994LMI}, enabling feedback synthesis directly from noisy input/state trajectories with certificates over all models consistent with the data \cite{vanWaarde2020MatrixSLemma}. Beyond LTI, recent work develops direct data-driven methods for linear time-varying (LTV) dynamics: for finite-horizon optimal control, \cite{Pang2018FiniteOptimal} proposes off-policy and online policy-iteration schemes for unknown discrete-time LTV systems; for state-feedback from trajectories, \cite{Nortmann2020CDC_LTV} derives convex, data-dependent design conditions, later extended in \cite{Nortmann2023TAC_LTV} to handle simultaneous process/measurement noise and periodic LTV cases.

LTV models arise naturally in nonlinear systems under changing operating conditions, and via linearizing along a nominal trajectory at time-varying operating points \cite{Kamen2010LTV}. This locally LTV deviation viewpoint (e.g., \cite{Reynolds2020TIFS}) underpins set-based safety methods for nonlinear tracking—tube MPC with bounded disturbances \cite{MAYNE2005219,RAKOVIC20121631} and SOS-based funnel certificates grounded in semidefinite relaxations \cite{Parrilo2003,Reynolds2020}—which synthesize time-varying sets around a nominal path and certify invariance. While powerful, these approaches typically presume a trusted dynamics model and substantial offline computation, making it challenging to adapt from finite, online trajectory data.

Digital twins enable nominal planning and policy prototyping prior to deployment \cite{Asch2022-iw,Zhang2024-wc}. Yet a twin is inevitably imperfect, so safely transferring nominal plans to the physical plant requires online quantification and mitigation of model–plant mismatch. This need is particularly acute when plant access is limited, risky, or costly; see also \cite{Wickramasinghe2025-us}.

\textbf{Contributions.} We develop a direct, online, certificate-based data-driven framework for constrained trajectory tracking of nonlinear plants from finite trajectories, without an identification step. Assuming access to a nominal trajectory computed on a known digital twin, we model plant–twin deviation as a locally LTV system and, from measured deviations, construct data-consistent uncertainty sets that capture linearization error and variation of the LTV matrices. Over these sets we synthesize time-varying ellipsoidal invariant tubes (funnels) via a matrix S-lemma reduction to LMIs, yielding non-conservative certificates that every closed-loop evolution remains within a robust positively invariant tube while respecting state and input constraints. The scheme operates online over successive time segments: within each segment a fixed feedback is applied while deviation data are gathered; at each boundary, the newly accrued data update the uncertainty set, and a single convex semidefinite program, minimizing tube volume subject to state and input limits, returns the tube and feedback for the next segment. We also provide informativity conditions specifying when the available data are sufficient for certification.

The remainder of this paper is organized as follows.
Section~\ref{sec:problem_statement} states the problem and assumptions.
Section~\ref{sec:preliminaries} develops the deviation-based modeling and basic bounds.
Section~\ref{sec:data_driven_synthesis} introduces the online, segmented learning setup and the resulting data-consistent uncertainty description.
Section~\ref{sec:data_driven_funnel} presents the funnel-synthesis framework, the associated optimization, and the main algorithmic and stability results.
Section~\ref{sec:case} presents the case study, and Section~\ref{sec:conclusion} concludes.

\paragraph*{Notation}
$M^\top$ denotes transpose and $M^{-1}$ the inverse (when nonsingular). $I_n$ is the $n\times n$ identity and $0_{n\times m}$ the $n\times m$ zero (sizes omitted when clear). For symmetric $A,B$, we use the Loewner order: $A\succeq0$ ($\succ0$) means positive (semi)definite and $A\preceq B\iff B-A\succeq0$. Extremal eigenvalues are $\lambda_{\min}(A),\lambda_{\max}(A)$. Norms: for vectors, $\|x\|$ is Euclidean; for matrices, $\|M\|$ is the induced spectral norm ($\|M\|_2$) unless stated, and $\|M\|_F$ is Frobenius. For a sequence $z(k)$ on $\mathcal S$, $\|z\|_{\infty,\mathcal S}:=\max_{k\in\mathcal S}\|z(k)\|$. For a function $g:\mathcal S\to\mathbb R^p$, $\|g\|_{\infty,\mathcal S}:=\sup_{s\in\mathcal S}\|g(s)\|$ (inner norm Euclidean). $\rank(\cdot)$ is rank; $\diag(\cdot)$ forms (block-)diagonals. $\oplus$ is the Minkowski sum; $\log\det(\cdot)$ is the log-determinant. Discrete time is indexed by $k\in\mathbb Z_{\ge0}$.

\section{Problem Statement}
\label{sec:problem_statement}

Consider an unknown discrete-time nonlinear system, the \emph{physical plant}, with the unknown dynamics:
\begin{equation}\label{eq:phys_dt}
    x(k+1) = f\big(x(k), u(k)\big),
\end{equation}
where \(x(k) \in \mathcal{X} \subset \mathbb{R}^n\) is the state and \(u(k) \in \mathcal{U} \subset \mathbb{R}^m\) is the control input. The sets \(\mathcal{X}\) and \(\mathcal{U}\) represent known, compact constraint sets on the state and input, respectively.

We assume that we have access to a \emph{digital twin} of the plant, with dynamics given by,
\begin{equation}\label{eq:twin_dt}
\hat{x}({k+1})= \hat{f}\big(\hat{x}(k), \hat{u}(k)\big),
\end{equation}
where \(\hat{x}(k) \in \mathcal{X}\) and \(\hat{u}(k) \in \mathcal{U}\) are the twin's state and input. The interconnection and data flow between the physical plant and its digital twin are illustrated in Fig.~\ref{fig:digital-twin}.

We assume that the plant dynamics \(f\) and the model \(\hat{f}\) are related through an additive mismatch \(\Delta: \mathcal{X} \times \mathcal{U} \to \mathbb{R}^n\) such that,
\begin{equation}\label{eq:mismatch}
f(x, u) = \hat{f}(x, u) + \Delta(x, u), \quad \forall (x, u) \in \mathcal{X} \times \mathcal{U}.
\end{equation}
The function \(\Delta\) encapsulates the discrepancies between the plant and its twin, including unmodeled dynamics, parametric uncertainty, and external disturbances.

\begin{figure}[H]
  \centering
  \includegraphics[width=0.65\columnwidth]{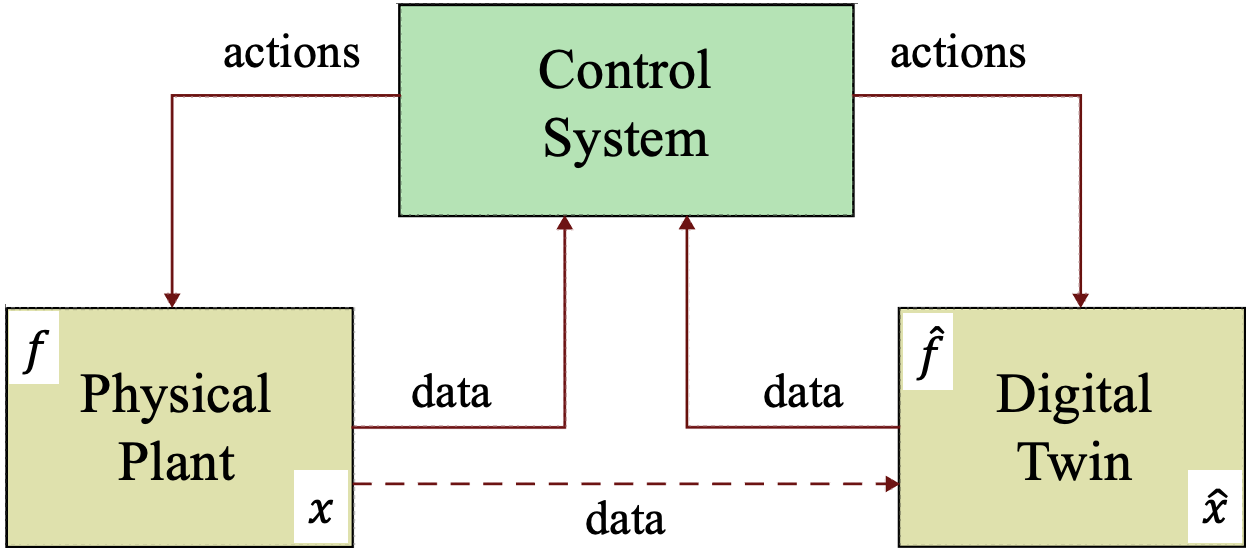}
  \caption{Interconnection and data flow between the physical plant and its digital twin.}
  \label{fig:digital-twin}
\end{figure}

\subsection{Modeling Assumptions}
\label{subsec:modeling_assumptions}

In order to derive bounds on the performance of our data-driven control scheme, we will build on the following assumptions for our subsequent analysis.
\begin{assumption}[Uniform Mismatch Bound]\label{ass:mismatch_uniform}
    The discrepancy function $\Delta$ between the plant and its digital twin is uniformly bounded. Formally, there exists a known constant \(\gamma \ge 0\) such that
    \[
    \|\Delta(x, u)\| \le \gamma \qquad \forall (x, u) \in \mathcal{X} \times \mathcal{U}.
    \]
\end{assumption}

\begin{assumption}[Smoothness]\label{ass:smoothness}
    The dynamics of the physical plant is continuously differentiable, i.e., \(f\in C^{1}\) \cite{Rudin1976PMA}.
\end{assumption}

\begin{assumption}[Feasible Nominal Trajectory]\label{ass:nominal_traj}
    A feasible nominal state-input trajectory for the digital twin is known a priori. Specifically, there exists a horizon \(N \in \mathbb{Z}_{\ge 1}\) and sequences \(\{\hat{x}_{\text{nom}}(k)\}_{k=0}^{N} \subseteq \mathcal{X}\), \(\{\hat{u}_{\text{nom}}(k)\}_{k=0}^{N-1} \subseteq \mathcal{U}\) that satisfy the digital twin dynamics \eqref{eq:twin_dt}.
    Furthermore, for \(j = 0, \dots, N-2\), this nominal sequence has bounded increments: there exists a constant \(v \ge 0\) such that
    \[
    \|(\hat{x}_{\text{nom}}({j+1}), \hat{u}_{\text{nom}}({j+1})) - (\hat{x}_{\text{nom}}(j), \hat{u}_{\text{nom}}(j))\| \le v.
    \]
\end{assumption}
\vspace{0.5em}
\begin{assumption}[Data Availability]\label{ass:data_availability}
    One can gather an ensemble of sufficiently rich input-state data from the physical plant.
\end{assumption}

The main problem that we examine in this work can now be stated as follows:
\begin{problem}\label{prob:1}
Due to the model–plant mismatch~(\ref{eq:mismatch}) and potential initial state perturbations, the nominal input sequence ${\hat{u}_{\text{nom}}(k)}$ designed for the digital twin may fail to safely steer the physical plant~(\ref{eq:phys_dt}). We therefore seek a causal feedback policy, using state–input data collected from the plant, that
\begin{enumerate}
  \item[(a)] computes an online corrective input based on the difference between the plant’s measured state and the nominal twin’s trajectory;
  \item[(b)] guarantees that the state–input pair $(x(k),u(k))$ remains within a robust positively invariant funnel $\mathcal{F}$ around $(\hat{x}_{\text{nom}}(k),\hat{u}_{\text{nom}}(k))$ (see Def.~\ref{def:funnel});
  \item[(c)] achieves the terminal condition $x(N)=\hat{x}_{\text{nom}}(N)$.
\end{enumerate}
\end{problem}

\begin{definition}[Funnel]\label{def:funnel}
A funnel, denoted by $\mathcal{F}$, is a time-varying set in state–input space that is invariant and lies entirely inside the feasible region $\mathcal X \times \mathcal U$.
\end{definition}

Given the nominal trajectory $(\hat x_{\text{nom}}(k),\hat u_{\text{nom}}(k))$, the constraint sets $(\mathcal{X},\mathcal{U})$, and online plant data, our objective is to synthesize a segmented feedback policy and associated time-varying funnels that satisfy items (a)–(c) in Problem~\ref{prob:1}.

\section{Preliminaries}
\label{sec:preliminaries}

In order to formulate the tracking performance between the plant and its twin, we define the state and input deviations as $\eta(k) \coloneqq x(k) - \hat{x}_{\text{nom}}(k)$ and $\xi(k)  \coloneqq u(k) - \hat{u}_{\text{nom}}(k)$, respectively.

Linearizing the plant dynamics \(f\) around the nominal trajectory point \((\hat{x}_{\text{nom}}(k), \hat{u}_{\text{nom}}(k))\) and using the mismatch definition~\eqref{eq:mismatch} yields the linear time-varying (LTV) error dynamics:
\begin{equation}\label{eq:eta_ltv}
\eta(k+1) = A(k)\,\eta(k) + B(k)\,\xi(k) + d(k),
\end{equation}
where $A(k)$ and $B(k)$ are the Jacobian matrices of $f$ evaluated on \((\hat{x}_{\text{nom}}(k), \hat{u}_{\text{nom}}(k))\) and
\begin{equation}\label{eq:disturbance}
    d(k) \coloneqq r(k) + \Delta(\hat{x}_{\text{nom}}(k), \hat{u}_{\text{nom}}(k)),
\end{equation}
where \(r(k)\) is the linearization error (see Taylor’s theorem \cite{Rudin1976PMA}).

\begin{lemma}[Disturbance Bound]\label{lem:disturbance_bound}
Under Assumptions~\ref{ass:mismatch_uniform} and~\ref{ass:smoothness}, and for all $(\eta,\, \xi)$ such that $(\hat{x}_{\text{nom}}(k)+\eta,\, \hat{u}_{\text{nom}}(k)+\xi) \in \mathcal{X}\times \mathcal{U}$, the disturbance $d(k)$ in~\eqref{eq:disturbance} admits the bound:
\begin{equation}\label{eq:dk_state_dep}
\|d(k)\| \le \gamma + L_r \left\|\begin{bmatrix}\eta(k)\\ \xi(k)\end{bmatrix}\right\|^2
\end{equation}
where $L_r$ is a Lipschitz constant for the Jacobian of $f$ on the compact set $\mathcal{X}\times \mathcal{U}$; see \cite{Nesterov2004}.
\end{lemma}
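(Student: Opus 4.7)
The plan is to decompose $d(k) = r(k) + \Delta(\hat{x}_{\text{nom}}(k), \hat{u}_{\text{nom}}(k))$ via the triangle inequality and bound the two terms independently. The mismatch contribution is immediate: since the nominal pair lies in $\mathcal{X}\times\mathcal{U}$ by Assumption~\ref{ass:nominal_traj}, Assumption~\ref{ass:mismatch_uniform} directly gives $\|\Delta(\hat{x}_{\text{nom}}(k), \hat{u}_{\text{nom}}(k))\| \le \gamma$, which furnishes the additive $\gamma$ in~(\ref{eq:dk_state_dep}).

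The substantive step is bounding the linearization residual $r(k)$. Collecting the state–input pair as $z(k) := (x(k), u(k))$ with nominal $\hat{z}_{\text{nom}}(k) := (\hat{x}_{\text{nom}}(k), \hat{u}_{\text{nom}}(k))$ and deviation $\delta z(k) := (\eta(k), \xi(k))$, Taylor's theorem applied componentwise to $f$ with integral remainder yields
\begin{equation*}
r(k) \;=\; \int_0^1 \bigl[\, Df(\hat{z}_{\text{nom}}(k) + t\,\delta z(k)) - Df(\hat{z}_{\text{nom}}(k)) \,\bigr]\, \delta z(k)\, dt,
\end{equation*}
where $Df = [\,A(k)\ \ B(k)\,]$ is the full Jacobian. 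Invoking Lipschitz continuity of $Df$ over $\mathcal{X}\times\mathcal{U}$ with constant proportional to $L_r$ (the standard $C^{1,1}$ setting of \cite{Nesterov2004}), the integrand norm is controlled by $L_r\, t\, \|\delta z(k)\|^2$ up to a factor absorbed into the redefinition of $L_r$, and integration over $[0,1]$ produces $\|r(k)\| \le L_r\,\|\delta z(k)\|^2$. Summing the two bounds yields~(\ref{eq:dk_state_dep}).

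The main obstacle I anticipate is not the chain of inequalities but a modeling subtlety: Assumption~\ref{ass:smoothness} grants only $f\in C^1$, which on the compact set $\mathcal{X}\times\mathcal{U}$ guarantees uniform but not necessarily Lipschitz continuity of $Df$. The lemma statement and its pointer to \cite{Nesterov2004} implicitly sharpen this to $C^{1,1}$, so that a finite $L_r$ exists. A second, minor point is that the Taylor expansion requires a segment from $\hat{z}_{\text{nom}}(k)$ to $\hat{z}_{\text{nom}}(k)+\delta z(k)$ on which $f$ is defined and $C^{1,1}$; since $f$ extends to an open neighborhood of the compact constraint set and both endpoints lie in this set by the hypothesis $(\hat{x}_{\text{nom}}(k)+\eta,\,\hat{u}_{\text{nom}}(k)+\xi)\in\mathcal{X}\times\mathcal{U}$, any convex open neighborhood containing both endpoints suffices, so no convexity of $\mathcal{X}$ or $\mathcal{U}$ need be invoked.
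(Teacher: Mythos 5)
Your proposal is correct and follows exactly the route the paper intends: the paper gives no explicit proof of Lemma~\ref{lem:disturbance_bound}, relying only on the triangle-inequality split of $d(k)$ into $\Delta(\hat{x}_{\text{nom}},\hat{u}_{\text{nom}})$ (bounded by $\gamma$ via Assumption~\ref{ass:mismatch_uniform}) and the Taylor remainder $r(k)$ (bounded by the standard $C^{1,1}$ descent-lemma estimate cited from \cite{Nesterov2004}), which is precisely what you spell out, including the factor $\tfrac{1}{2}$ absorbed into $L_r$. You also correctly flag the only real gap, which lies in the paper's hypotheses rather than in your argument: $f\in C^{1}$ does not yield a Lipschitz Jacobian even on a compact set, so the lemma tacitly strengthens Assumption~\ref{ass:smoothness} to $C^{1,1}$ (and, for the integral-remainder step, tacitly requires the segment between $\hat z_{\text{nom}}(k)$ and $\hat z_{\text{nom}}(k)+\delta z(k)$ to stay where that Lipschitz bound holds, e.g.\ by convexity of $\mathcal X\times\mathcal U$ as in the box constraints of the case study).
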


\begin{lemma}[Variation in Linearization]\label{lem:AB_var}
Under Assumptions~\ref{ass:smoothness} and~\ref{ass:nominal_traj}, there exists a constant \(L_J > 0\) (a Lipschitz constant for the Jacobian of \(f\) on $\mathcal{X}\times \mathcal{U}$) such that, for all \(k, s \in \{0, \dots, N-1\}\), (see, e.g., \cite{NesterovPolyak2006}),
\[
\left\| \begin{bmatrix} A(k)-A(s) & B(k)-B(s) \end{bmatrix} \right\|_2 \le L_J \, v \, |k-s|.
\]
\end{lemma}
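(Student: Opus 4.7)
The plan is to bound $\|[A(k)-A(s)\ B(k)-B(s)]\|_2$ by composing two Lipschitz-style estimates: one controlling how far the nominal trajectory moves in $|k-s|$ steps, and one controlling how fast the Jacobian of $f$ can change with its argument.

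First I would introduce the augmented nominal variable $z_{\text{nom}}(j) := \bigl(\hat{x}_{\text{nom}}(j),\hat{u}_{\text{nom}}(j)\bigr) \in \mathcal{X}\times\mathcal{U}$ and observe that $[A(j)\ B(j)]$ is exactly the full Jacobian $Df\bigl(z_{\text{nom}}(j)\bigr)$ evaluated at the nominal point. Telescoping together with the triangle inequality and the per-step increment bound in Assumption~\ref{ass:nominal_traj} yields
\[
\|z_{\text{nom}}(k)-z_{\text{nom}}(s)\| \;\le\; \sum_{j}\|z_{\text{nom}}(j+1)-z_{\text{nom}}(j)\| \;\le\; v\,|k-s|,
\]
where the sum runs over the $|k-s|$ consecutive pairs between $\min(k,s)$ and $\max(k,s)$. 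Since the nominal iterates stay in the compact set $\mathcal{X}\times\mathcal{U}$, the entire chord used in the estimate remains in that set.

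Second, I would invoke Lipschitz continuity of the Jacobian on $\mathcal{X}\times\mathcal{U}$, with the same constant $L_J$ already used (via the Nesterov-Polyak inequality) in Lemma~\ref{lem:disturbance_bound}. This gives $\|Df(z_1)-Df(z_2)\|_2 \le L_J\,\|z_1-z_2\|$ for all $z_1,z_2\in\mathcal{X}\times\mathcal{U}$. Specializing to $z_1=z_{\text{nom}}(k)$, $z_2=z_{\text{nom}}(s)$ and substituting the telescoping bound above produces the claim.

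The main subtlety, and the only nontrivial obstacle, is that Assumption~\ref{ass:smoothness} only posits $f\in C^{1}$, which by itself does not force a Lipschitz Jacobian even on a compact domain. The uniform constant $L_J$ therefore relies on an implicit strengthening—namely that $f$ has a Lipschitz-continuous Jacobian on (a neighborhood of) $\mathcal{X}\times\mathcal{U}$, equivalently $f\in C^{1,1}$ (or $C^{2}$, in which case $L_J$ can be taken as the supremum of the operator norm of the second-derivative tensor over the compact set). This is precisely the latent smoothness condition already invoked through the cited Nesterov-Polyak result, and so it is consistently available here; once granted, the two-step Lipschitz composition above completes the proof.
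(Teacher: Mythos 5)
Your proof is correct and is the standard (and evidently intended) argument: the paper itself gives no proof of this lemma, only a citation, and your two-step composition—telescoping the nominal increments to get $\|z_{\text{nom}}(k)-z_{\text{nom}}(s)\|\le v\,|k-s|$, then applying Lipschitz continuity of $Df$ on the compact set—is exactly what fills that gap. Your caveat is also apt: Assumption~\ref{ass:smoothness} ($f\in C^1$) alone does not yield a Lipschitz Jacobian, and the paper sidesteps this by building the existence of $L_J$ into the lemma statement itself, so the argument implicitly requires $f\in C^{1,1}$ on (a convex neighborhood of) $\mathcal X\times\mathcal U$, just as you observe.
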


\section{Data-Driven Deviation Model}
\label{sec:data_driven_synthesis}

This section specifies how trajectory data are collected on successive time segments and how these data induce a deviation-based, data-consistent system description.

The proposed algorithm proceeds iteratively. The time horizon is partitioned into segments. During each segment $i$, a fixed control law is applied while the input-state data is gathered. At the end of the segment, this dataset is used to synthesize a new controller, that is applied over the {\em subsequent time} segment $i+1$. This process is then repeated, leveraging the nominal trajectory and the measured deviations, iteratively improving the control performance and its robustness. The following subsections formalize each component of the proposed approach.

\subsection{Data Collection}
\label{subsec:data_collection}

In order to enable the data-driven synthesis of the proposed robust controllers, we partition the time horizon into fixed-length segments of $T$ steps (cf.~\cite{Liu_2023}). Each segment $i$ corresponds to a control application interval where a fixed feedback gain $K_i$ is employed. The segment boundaries are defined as:
\begin{equation}
    \mathcal{T}_i = \{k_i, k_i+1, \dots, k_{i+1}-1\}, \quad k_i = iT, \quad i \in \mathbb{Z}_{\ge 0}.
\end{equation}
During $\mathcal{T}_i$, the control input deviation is computed as $\xi(k) = K_i \eta(k)$ for all $k \in \mathcal{T}_i$, except during a dedicated data collection window.

To initialize this process, we require a controller for the first segment, $\mathcal{T}_0$; this controller can be synthesized using the known dynamics of the digital twin (or data collected from it offline), for instance, by designing a robust controller for the twin's linearization along the nominal trajectory. We then make the following assumption for initializing learning-based controllers.
\begin{assumption}[Initial Stabilizing Controller]\label{ass:initial_controller}
There exists an initial feedback gain $K_0$, designed based on the digital twin model, that is robustly stabilizing for the physical plant's deviation dynamics \eqref{eq:eta_ltv} over the initial time segment $\mathcal{T}_0$. Specifically, for any initial deviation $\eta(0)$ within a predefined set, the control law $\xi(k)= K_0 \eta(k)$ ensures that the state and input constraints $x(k)\in \mathcal{X}$ and $u(k)\in \mathcal{U}$ are satisfied for all $k \in \mathcal{T}_0$.
\end{assumption}

With initial safety guaranteed, the iterative process of controller refinement can begin. The first step involves gathering data from the plant to inform the next control policy.
At the end of each segment, we allocate a data collection window $\mathcal{T}_i^D$ to acquire informative data for updating the controller for the subsequent segment. Specifically,
\begin{equation}\label{eq:data_window}
    \mathcal{T}_i^D \coloneqq \{ k_i^D, k_i^D+1, \dots, k_{i+1}-1 \},
\end{equation}
where $k_i^D \coloneqq k_{i+1} - L$, and $L$ denotes the length of the data window. During $\mathcal{T}_i^D$, we introduce a bounded excitation signal $\epsilon(k)$ to ensure persistence of excitation. Thus, the control input deviation becomes,
\begin{equation}
    \xi(k) = K_i \eta(k) + \epsilon(k), \quad k \in \mathcal{T}_i^D.
\end{equation}
The excitation signal $\epsilon(k)$ is generated randomly and satisfies $\|\epsilon(k)\| \leq \bar{\epsilon}$ for all $k \in \mathcal{T}_i^D$, where $\bar{\epsilon} > 0$ is a predefined bound.

The data collected during $\mathcal{T}_i^D$ is used to construct the following matrices:
\begin{subequations}\label{eq:data_matrices_deviation}
\begin{align}
H_{i} & \coloneqq \begin{bmatrix}
    \eta(k_i^D) & \eta(k_i^D+1) & \cdots & \eta(k_{i+1}-1)
\end{bmatrix}, \label{eq:H_i} \\
H_{i}^+ & \coloneqq \begin{bmatrix}
    \eta(k_i^D+1) & \eta(k_i^D+2) & \cdots & \eta(k_{i+1})
\end{bmatrix}, \label{eq:H_i_plus} \\
\Xi_{i} & \coloneqq \begin{bmatrix}
    \xi(k_i^D) & \xi(k_i^D+1) & \cdots & \xi(k_{i+1}-1)
\end{bmatrix}. \label{eq:Xi_i}
\end{align}
\end{subequations}
Figure~\ref{fig:timeline-synthesis} summarizes how these matrices are gathered and then used to synthesize the next segment’s controller.

\begin{figure}[t]
  \centering
\includegraphics[width=0.9\columnwidth]{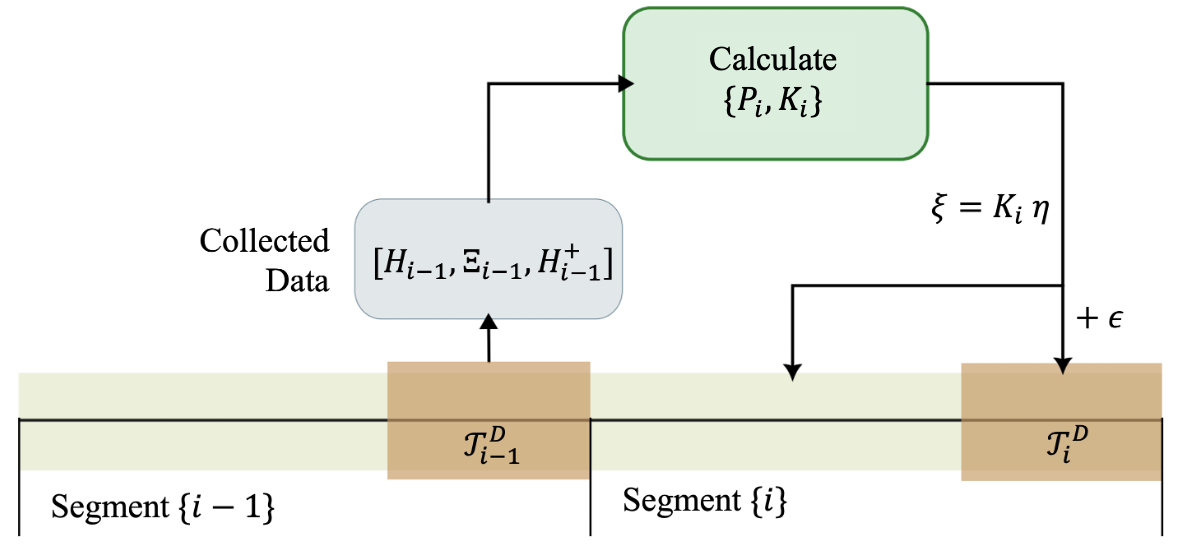}

  \caption{Segmented data collection and synthesis. During $\mathcal{T}^{D}_{i-1}$ we collect $(H_{i-1},\,\Xi_{i-1},\,H^{+}_{i-1})$. At $k_i$, these are passed to a convex synthesis step that returns $(P_i, K_i)$ for segment $\mathcal{T}_i$.}
  \label{fig:timeline-synthesis}
\end{figure}

To ensure the data are sufficiently informative for controller synthesis, we require that the persistence of excitation condition holds \cite{DePersis2019}:
\begin{equation}\label{eq:rank-condition}
    \operatorname{rank} \left( \begin{bmatrix} H_i \\ \Xi_i \end{bmatrix} \right) = n + m.
\end{equation}
A sufficient choice is to fix $L \ge n+m$, which makes \eqref{eq:rank-condition} achievable under bounded excitation.

\subsection{Data-Driven System Representation}

This subsection characterizes all systems consistent with the data collected during segment $i$, i.e., the matrices $H_i$, $H_i^+$, $\Xi_i$ defined in~\eqref{eq:data_matrices_deviation}. The goal is to construct uncertainty sets for the linearized dynamics that will be used for robust controller synthesis.

Let $A_i \coloneqq A(k_i)$ and $B_i \coloneqq B(k_i)$ denote the Jacobians of the true plant dynamics $f$ linearized at the nominal point $(\hat{x}_{\text{nom}}(k_i), \hat{u}_{\text{nom}}(k_i))$ at the start of segment $i$. For any $k \in \mathcal{T}_i^D$, the variation of the Jacobians throughout the segment is given by:
\[
\Delta A_{i,k} := A(k) - A_i, \quad \Delta B_{i,k} := B(k) - B_i.
\]

The deviation dynamics~\eqref{eq:eta_ltv} can thus be rewritten with a frozen model at $(A_i, B_i)$ and a compounded disturbance term:
\begin{equation}\label{eq:frozen_rep}
\eta(k+1) = A_i \eta(k) + B_i \xi(k) + w_i(k),
\end{equation}
where the disturbance $w_i(k)$ aggregates the effects of Jacobian variation and the original disturbance $d(k)$:
\[
w_i(k) \coloneqq \Delta A_{i,k} \, \eta(k) + \Delta B_{i,k} \, \xi(k) + d(k).
\]

\begin{lemma}[Bound on Total Disturbance]\label{lem:total_disturbance}
For any $k \in \mathcal{T}_i^D$, the total disturbance $w_i(k)$ in~\eqref{eq:frozen_rep} satisfies the following bound:
\[
\|w_i(k)\| \le C |k - k_i| \, \left\| \begin{bmatrix} \eta(k) \\ \xi(k) \end{bmatrix} \right\| + \gamma + L_r \left\| \begin{bmatrix} \eta(k) \\ \xi(k) \end{bmatrix} \right\|^2,
\]
where the constants are $C = L_J v$ (from Lemma~\ref{lem:AB_var}), $\gamma \ge 0$ (from Assumption~\ref{ass:mismatch_uniform}), and $L_r>0$ (from Lemma~\ref{lem:disturbance_bound}).
\end{lemma}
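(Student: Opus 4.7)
The plan is to decompose $w_i(k)$ into two pieces that are already handled by the previous lemmas, and then invoke the triangle inequality.

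First I would write
\[
w_i(k) = \begin{bmatrix} \Delta A_{i,k} & \Delta B_{i,k}\end{bmatrix}\begin{bmatrix}\eta(k)\\\xi(k)\end{bmatrix} + d(k),
\]
and apply the triangle inequality to obtain
\[
\|w_i(k)\| \le \bigl\|\begin{bmatrix}\Delta A_{i,k} & \Delta B_{i,k}\end{bmatrix}\bigr\|_2\cdot \Bigl\|\begin{bmatrix}\eta(k)\\\xi(k)\end{bmatrix}\Bigr\| + \|d(k)\|,
\]
where I use submultiplicativity (induced spectral norm times Euclidean norm) on the first summand.

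Next, I would bound each term with a previously established result. For the Jacobian-variation term, Lemma~\ref{lem:AB_var} (applied with $s=k_i$) gives
\[
\bigl\|\begin{bmatrix}\Delta A_{i,k} & \Delta B_{i,k}\end{bmatrix}\bigr\|_2 = \bigl\|\begin{bmatrix}A(k)-A(k_i) & B(k)-B(k_i)\end{bmatrix}\bigr\|_2 \le L_J v\,|k-k_i|,
\]
so the constant $C=L_J v$ appears naturally. For the residual term, I would invoke Lemma~\ref{lem:disturbance_bound}, which gives $\|d(k)\|\le \gamma + L_r\,\|[\eta(k)^\top\;\xi(k)^\top]^\top\|^2$. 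Substituting both bounds into the display above yields the claimed inequality term-by-term.

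The only subtlety—and the one thing I would want to explicitly flag in the write-up—is that Lemma~\ref{lem:disturbance_bound} requires $(\hat{x}_{\text{nom}}(k)+\eta(k),\,\hat{u}_{\text{nom}}(k)+\xi(k))\in\mathcal X\times\mathcal U$, i.e., that the current plant state–input pair lies in the admissible set during the data window $\mathcal T_i^D$. This is not a separate hypothesis one has to verify here: it follows from Assumption~\ref{ass:initial_controller} for $i=0$, and, for subsequent segments, it is precisely the invariance property that the segmented synthesis to follow is designed to guarantee (together with boundedness of the excitation $\epsilon(k)$). I would simply remark that the bound is asserted pointwise for any $k\in\mathcal T_i^D$ at which $(x(k),u(k))\in\mathcal X\times\mathcal U$; no other technical obstacle arises, since the remainder of the argument is just triangle inequality plus the two cited lemmas.
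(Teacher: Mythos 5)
Your proof is correct and matches the argument the paper intends (the lemma is stated without an explicit proof, but the decomposition of $w_i(k)$ into the Jacobian-variation term bounded via Lemma~\ref{lem:AB_var} with $s=k_i$ and the residual $d(k)$ bounded via Lemma~\ref{lem:disturbance_bound} is exactly the route implied by the definitions in \eqref{eq:frozen_rep}). Your flag about the domain hypothesis $(x(k),u(k))\in\mathcal X\times\mathcal U$ is a fair and careful observation that the paper leaves implicit.
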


\begin{lemma}[Bound on Jacobian Variation]\label{lem:jacobian_variation}
For any $k \in \mathcal{T}_i \cup \mathcal{T}_{i+1}$, the variation $(\Delta A_{i,k}, \Delta B_{i,k})$ satisfies the following quadratic matrix inequality:
\begin{equation}\label{eq:bound-on-DeltaAB}
\big[\,\Delta A_{i,k}\ \ \Delta B_{i,k}\,\big]
\begin{bmatrix}
\Delta A_{i,k}^\top\\[2pt]
\Delta B_{i,k}^\top
\end{bmatrix}
\;\preceq\; C^2 \widetilde T_i^{\,2} I_n .
\end{equation}
where $\widetilde T_i \ge \max_{k \in \mathcal{T}_i \cup \mathcal{T}_{i+1}} |k - k_i|$ (e.g., $\widetilde T_i = 2T-1$), and $C = L_J v$ is the same constant as in Lemma~\ref{lem:total_disturbance}.
\end{lemma}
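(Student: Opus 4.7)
The plan is to reduce the claimed matrix inequality to a spectral-norm bound on the horizontally concatenated variation $M(k) := [\,\Delta A_{i,k}\ \ \Delta B_{i,k}\,]$, and then invoke Lemma~\ref{lem:AB_var}. The left-hand side of~\eqref{eq:bound-on-DeltaAB} is precisely $M(k)M(k)^\top$, so the strategy is to sandwich $M(k)M(k)^\top$ between the spectral norm squared and an identity. To this end, I would first apply the standard fact that $M(k)M(k)^\top \preceq \|M(k)\|_2^{\,2}\, I_n$, which follows from the identity $v^\top M(k)M(k)^\top v = \|M(k)^\top v\|^2 \le \|M(k)^\top\|_2^{\,2}\,\|v\|^2 = \|M(k)\|_2^{\,2}\,\|v\|^2$ valid for every $v\in\R^n$.

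The remaining task is to upper bound the spectral norm on the right by the product $C\,\widetilde T_i$. Specializing Lemma~\ref{lem:AB_var} to $s=k_i$ yields $\|M(k)\|_2 \le L_J v\,|k-k_i| = C\,|k-k_i|$. Since $k \in \mathcal{T}_i \cup \mathcal{T}_{i+1}$, the definition of $\widetilde T_i$ gives $|k-k_i|\le \widetilde T_i$, and hence $\|M(k)\|_2 \le C\widetilde T_i$. Squaring this scalar bound and combining with the sandwich inequality from the previous step produces $M(k)M(k)^\top \preceq C^2\widetilde T_i^{\,2}\, I_n$, which is exactly~\eqref{eq:bound-on-DeltaAB}. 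The argument is short and contains no real obstacle; the only point worth emphasizing is that~\eqref{eq:bound-on-DeltaAB} is a direct matrix-inequality repackaging of the scalar spectral-norm estimate already proved in Lemma~\ref{lem:AB_var}, with the cover constant $\widetilde T_i$ chosen to handle the union of the current segment $\mathcal T_i$ with the subsequent one $\mathcal T_{i+1}$ (so the bound remains valid while the newly synthesized controller is deployed over $\mathcal T_{i+1}$).
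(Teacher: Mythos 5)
Your proof is correct and follows exactly the argument the paper intends (the paper states this lemma without an explicit proof, but it is a direct consequence of Lemma~\ref{lem:AB_var} specialized to $s=k_i$, combined with $MM^\top\preceq\|M\|_2^2 I_n$ and $|k-k_i|\le\widetilde T_i$, precisely as you wrote). Nothing is missing.
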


The following definition captures Jacobian variations consistent with the slow variation guaranteed by Lemma~\ref{lem:AB_var}.

\begin{definition}[Admissible Variation Set]\label{def:admissible_variation_set}
For segment $i$, the set of all admissible Jacobian variations is defined as:
\[
\Sigma_i^\Delta \coloneqq \left\{ (\Delta A, \Delta B) \in \mathbb{R}^{n \times n} \times \mathbb{R}^{n \times m} \, \middle| \, \eqref{eq:bound-on-DeltaAB} \text{ holds} \right\}.
\]
\end{definition}

Stacking the dynamics~\eqref{eq:frozen_rep} over the data collection window $\mathcal{T}_i^D$ yields the fundamental data equation:
\begin{equation}\label{eq:data_behavior}
H_i^{+} = A_i H_i + B_i \Xi_i + W_i,
\end{equation}
where $W_i := \begin{bmatrix} w_i(k_i^D) & \cdots & w_i(k_{i+1}-1) \end{bmatrix}$ is the stacked disturbance matrix. The pointwise bounds from Lemma~\ref{lem:total_disturbance} can be aggregated over the data window to form a quadratic constraint on $W_i$. This leads to the definition of the \emph{data-consistent system set}:

\begin{definition}[Data-Consistent System Set]\label{def:data_consistent_set}
The set of all matrices $(A_i, B_i)$ consistent with the data $(H_i, H_i^+, \Xi_i)$ and the disturbance bounds is given by:
\[
\Sigma_i \coloneqq \left\{ (A_i, B_i) \,\middle|\, \exists W_i \text{ such that } \eqref{eq:data_behavior} \text{ and } \eqref{eq:boundW} \text{ hold} \right\},
\]
\end{definition}
where the quadratic constraint
\begin{equation}\label{eq:boundW}
W_i W_i^\top \;\preceq\; \beta_i I.
\end{equation}
encodes the aggregated disturbance bound. The constant $\beta_i$ is an upper bound on the total energy of the disturbance sequence:
\begingroup
\setlength{\abovedisplayskip}{4pt}%
\setlength{\belowdisplayskip}{4pt}%
\begin{equation}\label{eq:beta_def}
\beta_i \coloneqq \sum_{k \in \mathcal{T}_i^D} \left( C\,|k - k_i| \, \left\| \begin{bmatrix} \eta(k) \\ \xi(k) \end{bmatrix} \right\| + \gamma + L_r \, \left\| \begin{bmatrix} \eta(k) \\ \xi(k) \end{bmatrix} \right\|^2 \right)^2.
\end{equation}
\endgroup

\begin{remark}\label{rem:beta_conservative}
The bound $\beta_i$ in~\eqref{eq:beta_def} is conservative,  as it aggregates worst-case point-wise bounds. This conservatism is a known trade-off in robust control that ensures safety guarantees. The online, data-driven nature of our algorithm mitigates this by continually refining the uncertainty set $\Sigma_i$ as new data arrives.
\end{remark}

\section{Data-Driven Funnel Synthesis}
\label{sec:data_driven_funnel}

In this section, at the start of each segment $i$, we design a funnel $\mathcal{F}_i$ (Def.~\ref{def:funnel}) together with a static feedback gain $K_i$ that keeps the deviation trajectories $(\eta,\xi)$ safe. The uncertainties are captured by the data-driven sets $\Sigma_{i-1}$ and $\Sigma_{i-1}^\Delta$ identified from the previous segment. 

Among all invariant-and-feasible funnels, we target the smallest one (in a volume sense) to tighten safety margins. To make the design tractable, we specialize $\mathcal{F}_i$ to ellipsoidal (Lyapunov) cross-sections of the state–input slices with a linear state-feedback— called a \emph{quadratic funnel}—which yields convex LMI conditions via the matrix S-lemma; see Fig.~\ref{fig:funnels}.

\subsection{Quadratic Funnel Framework}
\label{subsec:funnel_framework}
A quadratic funnel for segment $i$ is parameterized by a positive definite matrix $P_i \succ 0$ and a feedback gain matrix $K_i$. The $1$-level set of the Lyapunov function $V_i(\eta) = \eta^\top P_i \eta$ defines an ellipsoid of allowable state deviations:
\begin{equation}
\mathcal{E}(P_i) = \left\{ \eta \in \R^n \mid \eta^\top P_i \eta \le 1 \right\}.
\end{equation}

Under the linear control law $\xi = K_i \eta$, the induced input set is the ellipsoid
\[
\mathcal{E}_u(R_i) \;=\; \big\{ \xi \in \R^m \ \big|\ \xi^\top R_i^{-1} \xi \le 1 \big\},
\]
where $R_i = K_i P_i^{-1} K_i^\top$.

\begin{definition}[Quadratic Funnel]
A quadratic funnel for segment $i$ is the set in state–input space
\[
\mathcal{F}_i \;=\; \mathcal{E}(P_i) \times \mathcal{E}_u(R_i),
\]
parameterized by $P_i \succ 0$ and $K_i \in \R^{m \times n}$. The funnel is valid if:
\begin{itemize}
    \item \textbf{Invariance:} If $\eta(k) \in \mathcal{E}(P_i)$ at some $k \in \mathcal{T}_i$, then $\eta(k{+}1) \in \mathcal{E}(P_i)$ under all admissible uncertainties.
    \item \textbf{Feasibility:} The funnel lies within the state/input constraints:
    \begin{align*}
        \mathcal{E}(P_i) &\subseteq \left\{ \eta \in \R^n \mid \hat{x}_{\text{nom}}(k) + \eta \in \mathcal{X} \right\}, \\
        \mathcal{E}_u(R_i) &\subseteq \left\{ \xi \in \R^m \mid \hat{u}_{\text{nom}}(k) + \xi \in \mathcal{U} \right\}.
    \end{align*}
\end{itemize}
\end{definition}

\begin{figure}[t]
  \centering
  \def\svgwidth{\columnwidth}
  \begingroup\footnotesize
  \import{figures/}{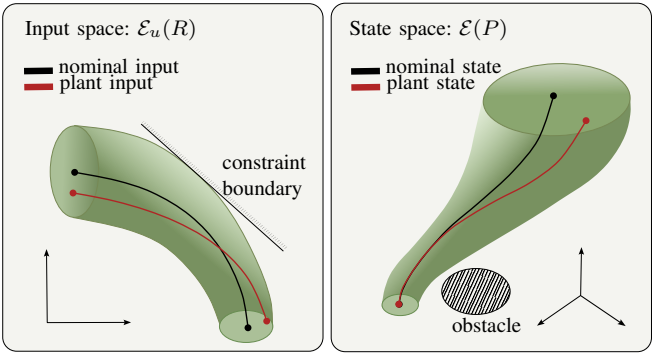}
  \endgroup
  \caption{Schematic of per-segment quadratic funnels in input (left) and state (right) spaces. At time \(k\), the 1-level set \(\{\eta:\eta^\top P_i \eta \le 1\}\) is centered at \(\hat x_{\text{nom}}(k)\). Under \(\xi=K_i\eta\), any trajectory starting inside remains inside; the induced input ellipsoid \(\mathcal{E}_u(R_i)\) respects input limits.}
  \label{fig:funnels}
\end{figure}

To ensure feasibility, we assume that for each time $k$ there exist matrices
$P_{\min}(k)\!\succ\!0$ and $R_{\max}(k)\!\succeq\!0$ such that the deviation
ellipsoids
\begin{align*}
    \mathcal{E}\!\big(P_{\min}(k)\big)&\subseteq
\{\eta:\ \hat x_{\mathrm{nom}}(k)+\eta\in\mathcal X\},\\
\mathcal{E}_u\!\big(R_{\max}(k)\big)&\subseteq
\{\xi:\ \hat u_{\mathrm{nom}}(k)+\xi\in\mathcal U\}.
\end{align*}
For synthesis over a segment $\mathcal{T}_i$, we use single envelopes
$P_{\min,i}\!\succeq\!P_{\min}(k)$ and $R_{\max,i}\!\preceq\!R_{\max}(k)$ for all
$k\in\mathcal{T}_i$, which imply the per–segment feasibility conditions
$\mathcal{E}(P_i)\subseteq\mathcal{E}(P_{\min,i})$ and
$\mathcal{E}_u(R_i)\subseteq\mathcal{E}_u(R_{\max,i})$.
\begin{remark}[Computing $P_{\min}(k)$ and $R_{\max}(k)$]\label{rem:mvie}
We linearize each state constraint $h^x_j(x)\le 0$ at $\hat x_{\mathrm{nom}}(k)$ to obtain a local polytope in deviation coordinates:
$a^x_j(k)^\top \eta \le b^x_j(k)$ with
$a^x_j(k)=\nabla h^x_j(\hat x_{\mathrm{nom}}(k))$ and $b^x_j(k)=-h^x_j(\hat x_{\mathrm{nom}}(k))$.
Compute the maximum–volume inscribed ellipsoid via
\begin{align*}
P_{\min}(k)^{-1/2}
  &= \arg\max_{Z \succ 0}\ \log\det Z \\
\text{s.t.}\quad
  & \|Z\,a^x_j(k)\|_2 \le b^x_j(k), \quad \forall j,\\
  & 0 \preceq Z \preceq x_{\max} I.
\end{align*}
The input envelope $R_{\max}(k)$ can be computed in precisely the same manner in the $\xi$–coordinates using linearizations of the input constraints at $\hat u_{\mathrm{nom}}(k)$. See \cite{Boyd2004,Reynolds2020}.
\end{remark}

\begin{lemma}[State/input feasibility LMIs]\label{lem:feasibility_lmis}
Let $L_i:=K_iP_i$ with $P_i\succ0$. If
\[
\begin{bmatrix} R_{\max,i} & L_i\\ L_i^\top & P_i \end{bmatrix} \succeq 0
\quad\text{and}\quad
P_i \succeq P_{\min,i},
\]
then the induced input ellipsoid satisfies $\mathcal{E}_u(R_i)\subseteq\mathcal{E}_u(R_{\max,i})$ with $R_i=K_iP_i^{-1}K_i^\top$, and the state ellipsoid satisfies $\mathcal{E}(P_i)\subseteq\mathcal{E}(P_{\min,i})$.
\end{lemma}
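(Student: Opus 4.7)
The plan is to establish the two ellipsoid inclusions separately, since each reduces to an elementary Loewner-order monotonicity argument.

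For the state inclusion $\mathcal{E}(P_i)\subseteq\mathcal{E}(P_{\min,i})$, I would simply unfold the definition and use the hypothesis $P_i\succeq P_{\min,i}$: for any $\eta\in\mathcal{E}(P_i)$, one has $\eta^\top P_{\min,i}\eta \le \eta^\top P_i\eta \le 1$, so $\eta\in\mathcal{E}(P_{\min,i})$. This is a one-liner with no hidden steps.

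For the input inclusion $\mathcal{E}_u(R_i)\subseteq\mathcal{E}_u(R_{\max,i})$, the plan is to convert the block LMI into a direct Loewner comparison of $R_i$ and $R_{\max,i}$ via the Schur complement. Since $P_i\succ 0$, the block LMI is equivalent to the scalar-matrix condition $R_{\max,i} - L_i P_i^{-1} L_i^\top \succeq 0$. Substituting $L_i = K_i P_i$ into the Schur-complement residual (the standard LMI change of variables that makes the funnel synthesis convex) identifies this residual with $R_i = K_i P_i^{-1} K_i^\top$, yielding $R_i \preceq R_{\max,i}$. Once this comparison is in hand, I would apply the order-reversing monotonicity of inversion on the positive-definite cone to deduce $R_{\max,i}^{-1}\preceq R_i^{-1}$. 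The inclusion then follows: any $\xi\in\mathcal{E}_u(R_i)$ satisfies $\xi^\top R_{\max,i}^{-1}\xi \le \xi^\top R_i^{-1}\xi \le 1$.

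The main step to verify carefully is the Schur-complement bookkeeping that identifies $L_i P_i^{-1} L_i^\top$ with $R_i$ under the change of variables $L_i = K_i P_i$; once that is done, everything else reduces to two standard Loewner-cone facts (quadratic-form monotonicity and order-reversing inversion) together with the definitions of $\mathcal{E}$ and $\mathcal{E}_u$. A minor technicality worth flagging is that the inversion step requires $R_i\succ 0$, which holds provided $K_i$ has full row rank; if desired, this can be enforced implicitly by assuming $R_{\max,i}\succ 0$ in the MVIE construction of Remark~\ref{rem:mvie}.
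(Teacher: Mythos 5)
Your state-ellipsoid argument is correct and is the standard one-liner. The input-ellipsoid argument, however, breaks at exactly the step you flagged as needing care: with $L_i=K_iP_i$ and $P_i=P_i^\top$, the Schur-complement residual of the block LMI is
\[
L_iP_i^{-1}L_i^\top \;=\; K_iP_i\,P_i^{-1}\,P_iK_i^\top \;=\; K_iP_iK_i^\top,
\]
\emph{not} $K_iP_i^{-1}K_i^\top=R_i$. So the hypothesis certifies $K_iP_iK_i^\top\preceq R_{\max,i}$, which is a genuinely different condition from $R_i\preceq R_{\max,i}$ (they coincide only when $P_i^2=I$), and the implication you need actually fails in general: take $n=m=1$, $K_i=1$, $P_i=\tfrac14$, $R_{\max,i}=\tfrac14$; the block matrix $\bigl[\begin{smallmatrix}1/4 & 1/4\\ 1/4 & 1/4\end{smallmatrix}\bigr]$ is positive semidefinite, yet $R_i=K_iP_i^{-1}K_i^\top=4\not\preceq\tfrac14$, so $\mathcal{E}_u(R_i)\not\subseteq\mathcal{E}_u(R_{\max,i})$. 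Hence the lemma as literally stated cannot be proved from the given hypotheses, and the algebraic identification in your proof is the false step.

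The root cause is a convention clash in the surrounding development rather than a fixable slip in your argument. The synthesis machinery (the proof of Theorem~\ref{thm:lmi_stability}) works with the dual Lyapunov inequality $A_{\mathrm{cl}}P_iA_{\mathrm{cl}}^\top\preceq\alpha P_i$, under which $P_i$ plays the role of a shape (inverse-of-level-set) matrix; in that convention the induced input ellipsoid has shape matrix $K_iP_iK_i^\top$ and the stated block LMI with $L_i=K_iP_i$ is exactly right. With the primal convention used in the definitions of $\mathcal{E}(P_i)$ and $R_i=K_iP_i^{-1}K_i^\top$, the correct linear certificate would instead place $K_i$ (not $L_i=K_iP_i$) in the off-diagonal block. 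A clean repair is to either (a) prove the containment $\{K_i\eta:\eta^\top P_i^{-1}\eta\le1\}\subseteq\mathcal{E}_u(R_{\max,i})$ with the induced set redefined via $R_i:=K_iP_iK_i^\top$, for which your Schur-complement-plus-inverse-monotonicity route then goes through verbatim, or (b) keep $R_i=K_iP_i^{-1}K_i^\top$ but replace the hypothesis by $\bigl[\begin{smallmatrix}R_{\max,i} & K_i\\ K_i^\top & P_i\end{smallmatrix}\bigr]\succeq0$. Your closing observation about needing $R_i\succ0$ (full row rank of $K_i$) for the inversion step is a legitimate technical caveat either way.
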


For invariance, we require the Lyapunov function to decrease along closed-loop trajectories. With $\xi = K_i \eta$,
\[
\eta(k{+}1) = \big[A(k) + B(k) K_i\big] \eta(k) + d(k).
\]
A sufficient condition is the existence of $\alpha \in (0,1)$ such that
\begin{equation}\label{eq:lyapunov_decrease}
V_i(\eta(k{+}1)) \le \alpha\, V_i(\eta(k))
\end{equation}
for all $\eta(k) \in \mathcal{E}(P_i)$, all admissible $d(k)$, and all $(A(k),B(k))$ in the uncertainty description.

\begin{problem}[Data-Driven Robust Funnel Synthesis]\label{prob:funnel_synthesis}
For each segment $i \ge 1$, using the data-driven uncertainty sets $\Sigma_{i-1}$ and $\Sigma_{i-1}^\Delta$ from the previous segment, find $P_i \succ 0$ and $K_i$ such that:
\begin{enumerate}
    \item[(a)] \textbf{Robust Invariance:} \eqref{eq:lyapunov_decrease} holds for all $(A(k),B(k)) \in \Sigma_{i-1} \oplus \Sigma_{i-1}^\Delta$, all $\eta(k)\in\mathcal{E}(P_i)$, and all admissible disturbances.
    \item[(b)] \textbf{Feasibility:} Lemma~\ref{lem:feasibility_lmis}  holds.
\end{enumerate}
\end{problem}

In the following, we translate this problem into a tractable semi-definite program using LMIs and the Matrix S-Lemma.

\subsection{LMI-Based Synthesis}
\label{subsec:lmi_synthesis}

This subsection presents a tractable solution to Problem~\ref{prob:funnel_synthesis} using the informativity framework.
We first formalize when the available data are rich enough to certify a robust, feasible funnel from trajectories alone. This leads to an informativity notion tailored to our data-driven, per-segment synthesis.
\begin{definition}[Informativity for Quadratic Funnel Synthesis]
The data $(H_{i-1}, H^+_{i-1}, \Xi_{i-1})$ are informative if there exist $P_i \succ 0$ and $K_i$ such that the stability and feasibility conditions hold for all $(A,B) \in \Sigma_{i-1} \oplus \Sigma_{i-1}^\Delta$ (cf.\ \cite{vanWaarde2020TAC}).
\end{definition}

Theorem~\ref{thm:lmi_stability} provides a verifiable LMI certificate of informativity: if the LMI is feasible, the data are informative and one directly recovers $(P_i,K_i)$.

\begin{theorem}\label{thm:lmi_stability}
The robust stability condition \eqref{eq:lyapunov_decrease} holds for all $(A,B)\in\Sigma_{i-1}$ and $(\Delta A,\Delta B)\in\Sigma_{i-1}^\Delta$ if there exist $\lambda_1,\lambda_2\ge0$, $\nu>0$ as a slack parameter, and matrices $P_i\in \mathbb{S}^n_{++}$, $L_i \in \mathbb{R}^{m\times n}$ such that
\begin{equation}\label{eq:lmi_stability}
S(P_i, L_i, \nu)-\lambda_1 \widetilde{N}_1-\lambda_2 \widetilde{N}_2 \succ 0,
\end{equation}
where (with $L_i := K_i P_i$) and
\[
S(P_i,L_i,\nu)=
\begin{bmatrix}
\alpha P_i-\nu I & 0 & 0 & 0 & 0 & 0\\
0 & -P_i & -L_i^\top & -P_i & -L_i^\top & 0\\
0 & -L_i & 0 & -L_i & 0 & L_i\\
0 & -P_i & -L_i^\top & -P_i & -L_i^\top & 0\\
0 & -L_i & 0 & -L_i & 0 & L_i\\
0 & 0 & L_i^\top & 0 & L_i^\top & P_i
\end{bmatrix}.
\]

\begin{align*}
&\widetilde N_1\;=\;
\begin{bmatrix}
N_1 & 0\\
0   & 0
\end{bmatrix},\\
&N_1 \;=\;
\begin{bmatrix}
I_n & {H_{i-1}^+} \\  0& -H_{i-1} \\ 0 & -\Xi_{i-1} \\ 0 & 0 \\ 0& 0
\end{bmatrix}
\begin{bmatrix}
    \beta_{i-1}I & 0 \\ 0 & -I
\end{bmatrix}
\begin{bmatrix}
I_n & {H_{i-1}^+} \\  0& -H_{i-1} \\ 0 & -\Xi_{i-1} \\ 0 & 0 \\ 0& 0
\end{bmatrix}^\top
\end{align*}

\begin{align*}
    &\widetilde{N}_2 = 
    \begin{bmatrix}
N_2 & 0\\
0   & 0
\end{bmatrix},\\
    & N_2 = \begin{bmatrix}
I_n & 0 & 0\\
0 & 0 &0 \\
0 & 0 & 0\\
0 & I_n & 0\\
0 & 0 & I_m
\end{bmatrix} \begin{bmatrix}
C^2 \widetilde T_i^{\,2} I & 0 & 0\\
0 & -I & 0\\
0 & 0 & -I
\end{bmatrix}\begin{bmatrix}
I_n & 0 & 0\\
0 & 0 &0 \\
0 & 0 & 0\\
0 & I_n & 0\\
0 & 0 & I_m
\end{bmatrix}^\top
\end{align*}
\end{theorem}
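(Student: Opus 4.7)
I would prove the theorem by casting the robust Lyapunov decrease and the two uncertainty descriptions as quadratic matrix inequalities (QMIs) in a common lifted variable, then invoking a two-multiplier matrix S-procedure in the spirit of \cite{vanWaarde2020MatrixSLemma}.

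\textbf{Step 1: lifting the Lyapunov decrease.} Substituting the closed-loop update $\eta_+ = (A+\Delta A)\eta+(B+\Delta B)K_i\eta+d$ into $\alpha\,\eta^\top P_i\eta-\eta_+^\top P_i\eta_+\ge 0$ and introducing a six-block auxiliary variable $\zeta := (d,\, A\eta,\, BK_i\eta,\, \Delta A\eta,\, \Delta BK_i\eta,\, \eta_+)$ linked by the linear identity $\eta_+=d+A\eta+BK_i\eta+\Delta A\eta+\Delta BK_i\eta$, I would apply the change of variables $L_i := K_iP_i$ to linearize the bilinear products in $K_i$ and $P_i$, followed by a congruence transformation (and, where needed, a Schur complement on the $\eta_+\eta_+^\top$ cross term) to exhibit the decrease as $\zeta^\top S(P_i,L_i,\nu)\zeta \ge 0$. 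The six-by-six block pattern in the statement should then fall out: the $(1,1)$ block $\alpha P_i-\nu I$ captures $\alpha V(\eta)$ with a slack $\nu>0$ that dominates the $d$ contribution, while the $\pm P_i$ blocks on $(2,2),(4,4),(6,6)$ and the $\pm L_i,\pm L_i^\top$ cross blocks arise from expanding $\eta_+^\top P_i\eta_+$ into its six summands.

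\textbf{Step 2: uncertainty QMIs and S-procedure.} By Definition~\ref{def:data_consistent_set}, the set $\Sigma_{i-1}$ is exactly the solution set of the QMI assembled from $H_{i-1},H_{i-1}^+,\Xi_{i-1},\beta_{i-1}$; zero-padding the coordinates corresponding to $(\Delta A,\Delta B)$ yields the lifted block $\widetilde N_1$. Likewise, \eqref{eq:bound-on-DeltaAB} in Lemma~\ref{lem:jacobian_variation} is a QMI in $(\Delta A,\Delta B)$ whose zero-padded version is $\widetilde N_2$. The theorem then reduces to the matrix implication: for every $\zeta$ with $\zeta^\top\widetilde N_1\zeta\ge 0$ and $\zeta^\top\widetilde N_2\zeta\ge 0$, one has $\zeta^\top S(P_i,L_i,\nu)\zeta>0$. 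The two-multiplier matrix S-procedure gives the sufficient condition $S-\lambda_1\widetilde N_1-\lambda_2\widetilde N_2\succ 0$ with $\lambda_1,\lambda_2\ge 0$, which is the announced LMI~\eqref{eq:lmi_stability}.

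\textbf{Main obstacle.} The hardest step is Step 1: choosing the lifted coordinates so the six-block sparsity of $S(P_i,L_i,\nu)$ is reproduced exactly, because the bilinear product $K_iP_i$ must collapse cleanly into $L_i$, and because the channels $(A,BK_i)$ and $(\Delta A,\Delta BK_i)$ enter the closed loop in parallel yet must remain separable so that the two independent QMIs can be dualized individually. A secondary subtlety is that, unlike the scalar S-lemma, the matrix S-procedure with two multipliers is in general only sufficient; this matches the one-way implication in the theorem statement. Finally, I would verify that the slack $\nu$ correctly absorbs the $d$-term, which appears in the update but is not quantified by either $\Sigma_{i-1}$ or $\Sigma_{i-1}^\Delta$.
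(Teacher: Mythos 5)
Your overall architecture---recast the decrease and the two uncertainty descriptions as quadratic matrix inequalities in a common lifted variable, apply a two-multiplier S-procedure, substitute $L_i=K_iP_i$, and Schur-complement away the bilinear terms---is the same skeleton as the paper's proof. However, your Step~1 lifting is the wrong object and would not reproduce the stated $S$, $\widetilde N_1$, $\widetilde N_2$. The paper quantifies over the unknown \emph{matrices}, stacking $\Phi_i=\begin{bmatrix} I & A_i & B_i & \Delta A_{i,k} & \Delta B_{i,k}\end{bmatrix}^\top$ (block heights $n,n,m,n,m$), writes the decrease $\alpha P_i - A_{\mathrm{cl},i}P_i A_{\mathrm{cl},i}^\top\succeq 0$ as the QMI $\Phi_i^\top M_i\Phi_i\succeq 0$, and expresses the data constraint $W W^\top\preceq\beta_{i-1}I$ with $W=H_{i-1}^+-A H_{i-1}-B\,\Xi_{i-1}$ and the variation bound \eqref{eq:bound-on-DeltaAB} as QMIs in the same $\Phi_i$; the multipliers then come from the \emph{matrix} S-lemma of \cite{vanWaarde2020MatrixSLemma}, not the classical vector S-procedure. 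Your $\zeta=(d,\,A\eta,\,BK_i\eta,\,\Delta A\eta,\,\Delta BK_i\eta,\,\eta_+)$ has every block of height $n$, which is dimensionally incompatible with the stated $N_1$ (whose $-\Xi_{i-1}$ row must multiply a block of height $m$, i.e.\ $B^\top$) and $N_2$ (whose $I_m$ block must multiply $\Delta B^\top$). Moreover the sixth block of $S$ is not $\eta_+$: it is an auxiliary block appended \emph{after} the change of variables $L_i=K_iP_i$ so that every $-K_iP_iK_i^\top$ term is regenerated by a Schur complement---this is the only reading consistent with the zero $(3,3)$ and $(5,5)$ blocks and the $L_i$ entries in the last row and column of $S$.

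Two further misattributions would derail the details. First, $\nu$ does not ``absorb the $d$ contribution''; it is the Slater-type margin $\diag(\nu I,0,\dots,0)$ required on the right-hand side of the S-lemma inequality so that the implication holds with the needed strictness on the first block. Second, the disturbance $d(k)$ is not left unquantified: it is folded into the total disturbance $w_i(k)$ and hence into $W_{i-1}$, whose energy bound $\beta_{i-1}$ (see \eqref{eq:beta_def}) defines $\Sigma_{i-1}$, so $d$ enters the certificate through $\widetilde N_1$ rather than through $\nu$. With the lifted variable corrected to the matrix stack $\Phi_i$ and these two roles reassigned, your Steps~1--2 become the paper's argument; your closing remark that the two-constraint S-procedure is only sufficient, matching the one-directional claim of the theorem, is correct and consistent with the paper.
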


\begin{proof}
Fix segment $i$. Let $V_i(\eta)=\eta^\top P_i\,\eta$ with $P_i=P_i^\top\succ0$.
Condition \eqref{eq:lyapunov_decrease} with $0<\alpha<1$ is equivalent to the matrix inequality
\begin{equation}\label{eq:design-pre}
\alpha P_i - A_{\mathrm{cl},i} P_i A_{\mathrm{cl},i}^\top \;\succeq\;0, 
\end{equation}
where $A_{\mathrm{cl},i}=A_{i}+B_{i}K_i+\Delta A_{i,k}+\Delta B_{i,k}K_i$. Expanding the quadratic term produces a quadratic matrix inequality (QMI) in the unknowns $(A_{i},B_{i},\Delta A_{i,k},\Delta B_{i,k})$ and in $K_i$. Define the lifted stack
\[
\Phi_i \;=\; \begin{bmatrix} I\\ A_{i}^\top\\ B_{i}^\top\\ \Delta A_{i,k}^\top\\ \Delta B_{i,k}^\top \end{bmatrix}.
\]
Then \eqref{eq:design-pre} is equivalent to $\Phi_i^\top M_i\,\Phi_i \;\succeq\; 0,$
with
\[
M_i=
\begin{bmatrix}
\alpha P_i & 0 & 0 & 0 & 0\\
0 & -P_i & -P_iK_i^\top & -P_i & -P_iK_i^\top\\
0 & -K_iP_i & -K_iP_iK_i^\top & -K_iP_i & -K_iP_iK_i^\top\\
0 & -P_i & -P_iK_i^\top & -P_i & -P_iK_i^\top\\
0 & -K_iP_i & -K_iP_iK_i^\top & -K_iP_i & -K_iP_iK_i^\top
\end{bmatrix}.
\]

The data/noise relation \eqref{eq:boundW} with
$W_i = H_i^{+} - A_{i} H_i - B_{i}\Xi_i$
can be written as a QMI in $\Phi_i$: $\Phi_i^\top\,N_1\,\Phi_i \;\succeq\; 0$ where $N_1 \coloneqq S_W\,M_W\,S_W^\top$, $M_W=\diag(\beta_{i-1} I,\,-I)$, and
\begin{align*}
    &S_W^\top =
\begin{bmatrix}
I & 0 & 0 & 0 & 0\\
H_{i-1}^{+\top} & -H_{i-1}^\top & -\Xi_{i-1}^\top & 0 & 0
\end{bmatrix}.\\
\end{align*}
Likewise, the uncertainty bound \eqref{eq:bound-on-DeltaAB} is $\Phi_i^\top\,N_2\,\Phi_i \;\succeq\; 0$ where $N_2 \coloneqq S_\Delta^\top\,M_\Delta\,S_\Delta,$
and $M_\Delta=\diag\!\big(C^2\widetilde T_i^{\,2} I_n,\,-I_n,\,-I_m\big)$ and
\begin{align*}
    &S_\Delta =
\begin{bmatrix}
I & 0 & 0 & 0 & 0\\
0 & 0 & 0 & I & 0\\
0 & 0 & 0 & 0 & I
\end{bmatrix}.
\end{align*}

We require this to hold for all $(A_{i},B_{i},\Delta A_{i,k},\Delta B_{i,k})$ that satisfy the two premise QMIs above.
Under a Slater-type feasibility, the matrix S-lemma yields scalars
$\lambda_1,\ \lambda_2,\ \nu>0$ such that
\begin{equation}\label{eq:S-lemma-5x5}
M\;-\;\lambda_1 N_1\;-\;\lambda_2 N_2\;\succeq\;\diag(\nu I_n,\,0,\,0,\,0,\,0).
\end{equation}

Introduce the standard change of variables $L_i\coloneqq K_iP_i$ and add an auxiliary $P$-block so that all occurrences of $-KPK^\top$ arise from a Schur complement. Define $S(P_i, L_i,\nu)$ as in Theorem~\ref{thm:lmi_stability}.
Pad the premise QMIs with a zero row/column matching the auxiliary blocks $\widetilde N_1$ and $\widetilde N_2$.
Then \eqref{eq:S-lemma-5x5} is equivalent to the linear matrix inequality (see \cite{vanWaarde2020MatrixSLemma}) \eqref{eq:lmi_stability} and the controller can be recovered as $K_i=L_i P_i^{-1}$.
\end{proof}

Theorem~\ref{thm:lmi_stability} provides an LMI certificate of informativity and robust one–step decay, while Lemma~\ref{lem:feasibility_lmis} encodes state/input set containment as LMIs. These yield the following convex program for computing $(P_i,K_i)$ at segment $i$; its constraints enforce robustness and feasibility, and the $\log\det(P_i)$ objective minimizes the volume of the state slice of the funnel.

\begin{equation}\label{eq:main_sdp}
\begin{aligned}
& \underset{P_i, L_i, \lambda_1, \lambda_2,\nu}{\text{maximize}} && \log\det(P_i) \\
& \text{subject to} \quad && P_i \succ 0,\ \lambda_1 \ge 0,\ \lambda_2 \ge 0, \\
&&& S(P_i,L_i,\nu) - \lambda_1 \widetilde N_1 - \lambda_2 \widetilde N_2 \succeq 0, \\
&&& P_i \succeq P_{\min,i}, \\
&&& \begin{bmatrix} R_{\max,i} & L_i \\ L_i^\top & P_i \end{bmatrix} \succeq 0 . 
\end{aligned}
\end{equation}

\subsection{Stability Analysis}
\label{subsec:stability_analysis}

This subsection establishes the closed-loop stability guarantees for the proposed segmented control strategy. The key result shows that, under the data-driven funnel synthesis, the deviation dynamics exhibit practical exponential stability within the synthesized funnels.

\begin{theorem}[Practical Exponential Stability within Funnels]
\label{thm:pges_main}
Let the deviation dynamics be given by \eqref{eq:eta_ltv} under the segmented policy on $\mathcal{T}_i=\{k_i,\dots,k_i+T-1\}$. 
Assume for every $i\ge0$:
\begin{enumerate}
\item[(i)] The LMI in Theorem~\ref{thm:lmi_stability} is feasible, yielding $(P_i\succ0,K_i)$ and ensuring the robust one-step decay
$V_i(\eta(k{+}1))\le \alpha\,V_i(\eta(k))$ for all $\eta(k)\in\mathcal E(P_i)$, all admissible $(A(k),B(k))$, and all admissible disturbances $d(k)$.
\item[(ii)] Uniform bounds $P_{\min}\preceq P_i\preceq P_{\max}$ with $p_{\min}:=\lambda_{\min}(P_{\min})$ and $p_{\max}:=\lambda_{\max}(P_{\max})$.
\item[(iii)] Cross-segment growth $P_{i+1}\preceq \mu\,P_i$ for some $\mu\ge1$.
\item[(iv)] During data windows $\mathcal{T}_i^D$, the excitation satisfies $\|\epsilon(k)\|\le\bar\epsilon$, and $\|B(k)\|\le\bar B$ holds for all $k$.
\end{enumerate}
If the dwell time satisfies $T>-\ln\mu/\ln\alpha$, then with $\hat\alpha:=\alpha\,\mu^{1/T}\in(\alpha,1)$ we have, for all $k\in\mathbb N$ with $\eta(k)\in\mathcal D:=\bigcup_i\mathcal E(P_i)$,
\begin{equation}
\label{eq:pges_bound_final_norm}
\|\eta(k)\|\ \le\
\sqrt{\tfrac{p_{\max}}{p_{\min}}}\;\hat{\alpha}^{k/2}\,\|\eta(0)\|
\ +\
\sqrt{\tfrac{p_{\max}}{p_{\min}}}\;
\frac{\big(\hat{\alpha}/\alpha\big)^{T/2}}{1-\sqrt{\hat{\alpha}}}\;\bar B\,\bar\epsilon.
\end{equation}
\end{theorem}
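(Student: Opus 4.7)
My plan is to work with the piecewise Lyapunov function $V_{i(k)}(\eta(k))$ together with a judicious scaling that converts the intra-segment decay and boundary jumps into a single uniform per-step contraction rate $\sqrt{\hat\alpha}$. Let $W(k):=\sqrt{V_{i(k)}(\eta(k))}$ and $\delta:=\sqrt{p_{\max}}\,\bar B\,\bar\epsilon$. Two one-step ingredients drive the analysis. First, under $\xi=K_i\eta$, assumption (i) gives $V_i(\eta(k+1))\le \alpha\,V_i(\eta(k))$; during a data window the extra input $B(k)\epsilon(k)$ enters additively, so by the subadditivity $\sqrt{V_i(a+b)}\le\sqrt{V_i(a)}+\sqrt{V_i(b)}$ and assumption (iv) I obtain $W(k+1)\le\sqrt{\alpha}\,W(k)+\delta$ (applying this uniformly in $k$ is conservative but convenient). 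Second, assumption (iii) yields the boundary comparability $W(k_{i+1}^+)\le\sqrt{\mu}\,W(k_{i+1}^-)$. The dwell-time condition $T>-\ln\mu/\ln\alpha$ is exactly $\hat\alpha=\alpha\mu^{1/T}\in(\alpha,1)$, which will make the compounded recursion contractive.

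The crucial step is to replace $W$ by a weighted version $\bar W(k):=W(k)\,\mu^{r(k)/(2T)}$ with $r(k):=k\bmod T$, which smears the once-per-segment factor $\sqrt{\mu}$ evenly across the segment. Using the identity $\sqrt{\alpha}\,\mu^{1/(2T)}=\sqrt{\hat\alpha}$ and a short case split, I would verify that, whether or not $k+1$ is a segment boundary, $\bar W(k+1)\le\sqrt{\hat\alpha}\,\bar W(k)+\sqrt{\mu}\,\delta$: inside a segment the extra factor $\mu^{1/(2T)}$ coming from $r(k+1)=r(k)+1$ converts $\sqrt{\alpha}$ into $\sqrt{\hat\alpha}$, while at a boundary the weight resets from $\mu^{(T-1)/(2T)}$ to $1$, exactly compensating the jump $\sqrt{\mu}$ and leaving the same $\sqrt{\hat\alpha}$ rate. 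Iterating this scalar contraction yields $\bar W(k)\le\hat\alpha^{k/2}\,\bar W(0)+\sqrt{\mu}\,\delta/(1-\sqrt{\hat\alpha})$.

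The remaining steps are bookkeeping. Since $\mu^{r/(2T)}\ge 1$ we have $W(k)\le\bar W(k)$, and $\bar W(0)=W(0)$. The Rayleigh bounds implied by $p_{\min}I\preceq P_{i(k)}\preceq p_{\max}I$ give $\sqrt{p_{\min}}\,\|\eta(k)\|\le W(k)$ and $W(0)\le\sqrt{p_{\max}}\,\|\eta(0)\|$. Substituting these and using $(\hat\alpha/\alpha)^{T/2}=\sqrt{\mu}$ on the steady-state term reproduces the stated inequality~\eqref{eq:pges_bound_final_norm}. Throughout, the LMI invariance synthesized via Theorem~\ref{thm:lmi_stability} with Problem~\ref{prob:funnel_synthesis}(a) keeps $\eta(k)\in\mathcal{E}(P_{i(k)})\subseteq\mathcal D$, which licenses the repeated application of the robust decay at every step.

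I expect the main obstacle to be identifying the correct scaling $\mu^{r(k)/(2T)}$: a naive per-segment recursion decays only at boundaries and leaves a residual of order $(1-\sqrt{\alpha})^{-1}(1-\hat\alpha^{T/2})^{-1}$, which is strictly looser than the claimed $(1-\sqrt{\hat\alpha})^{-1}$ form. The weighting trick is what distributes the jump factor uniformly so that the geometric series collapses to $1/(1-\sqrt{\hat\alpha})$; the remaining ingredients—subadditivity of the Lyapunov norm, geometric summation, and the Rayleigh bounds on $P_i$—are routine, and the dwell-time condition is used only to guarantee $\hat\alpha<1$ so that the series converges.
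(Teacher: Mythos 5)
Your proposal is correct and follows essentially the same route as the paper: your weighted quantity $\bar W(k)=\sqrt{V_{i(k)}(\eta(k))}\,\mu^{r(k)/(2T)}$ is identical to the paper's $q(k)=(\hat\alpha/\alpha)^{(k-k_i)/2}\|P_i^{1/2}\eta(k)\|$, and the per-step contraction at rate $\sqrt{\hat\alpha}$, the boundary compensation via $P_{i+1}\preceq\mu P_i$, the geometric summation of the excitation term, and the final Rayleigh bounds all match the paper's argument (including the acknowledged conservatism of applying the additive $\bar B\bar\epsilon$ term uniformly).
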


\begin{proof}
Define, for $k\in\mathcal{T}_i$,
\[
q(k):=\Big(\tfrac{\hat\alpha}{\alpha}\Big)^{\frac{k-k_i}{2}}\big\|P_i^{1/2}\eta(k)\big\|.
\]
For $k\in\mathcal{T}_i\setminus\mathcal{T}_i^D$, item (i) gives
\[
\big\|P_i^{1/2}\!\big((A(k){+}B(k)K_i)\eta(k)+d(k)\big)\big\|
\le \sqrt{\alpha}\,\big\|P_i^{1/2}\eta(k)\big\|,
\]
hence $q(k{+}1)\le \sqrt{\hat\alpha}\,q(k)$. 
For $k\in\mathcal{T}_i^D$ we apply $\xi=K_i\eta+\epsilon$, yielding
\[
\big\|P_i^{1/2}\eta(k{+}1)\big\|
\le \sqrt{\alpha}\,\big\|P_i^{1/2}\eta(k)\big\|
     + \big\|P_i^{1/2}B(k)\epsilon(k)\big\|.
\]
Using $\|B(k)\|\le\bar B$, $\|\epsilon(k)\|\le\bar\epsilon$, and $\|P_i^{1/2}\|\le \sqrt{p_{\max}}$ gives
\begin{align*}
    q(k{+}1)&\le \sqrt{\hat\alpha}\,q(k)
           + \Big(\tfrac{\hat\alpha}{\alpha}\Big)^{\!\frac{k+1-k_i}{2}}\sqrt{p_{\max}}\,\bar B\,\bar\epsilon
           \\ &\le\ 
           \sqrt{\hat\alpha}\,q(k)
           + \Big(\tfrac{\hat\alpha}{\alpha}\Big)^{\!\frac{T}{2}}\sqrt{p_{\max}}\,\bar B\,\bar\epsilon.
\end{align*}
At $k=k_{i+1}$, item (iii) gives $P_{i+1}\preceq \mu P_i$, hence
$\|P_{i+1}^{1/2}\eta(k_{i+1})\|\le \mu^{1/2}\|P_i^{1/2}\eta(k_{i+1})\|
= \big(\tfrac{\hat\alpha}{\alpha}\big)^{T/2}\|P_i^{1/2}\eta(k_{i+1})\|$, so the definition of $q(\cdot)$ is consistent across the boundary. Unrolling the scalar recursion yields
\[
\big\|P_i^{1/2}\eta(k)\big\|
\le \hat\alpha^{k/2}\big\|P_0^{1/2}\eta(0)\big\|
   + \frac{(\hat\alpha/\alpha)^{T/2}}{1-\sqrt{\hat\alpha}}\;\sqrt{p_{\max}}\,\bar B\,\bar\epsilon.
\]
Finally, for any $i$, $\|\eta\|\le \frac{1}{\sqrt{\lambda_{\min}(P_i)}}\,\|P_i^{1/2}\eta\|\le \frac{1}{\sqrt{p_{\min}}}\|P_i^{1/2}\eta\|$ and
$\|P_0^{1/2}\eta(0)\|\le \sqrt{\lambda_{\max}(P_0)}\,\|\eta(0)\|\le \sqrt{p_{\max}}\,\|\eta(0)\|$, which yields \eqref{eq:pges_bound_final_norm}.
\end{proof}

\begin{remark}[Domain of Attraction]
The result holds on the regional domain $\mathcal{D}=\bigcup_i\mathcal{E}(P_i)$. Extending beyond $\mathcal{D}$ would require global assumptions or funnel coverage. The bound in \eqref{eq:pges_bound_final_norm} is conservative with respect to the excitation schedule, as it upper-bounds by the worst case over each segment.
\end{remark}

The stability analysis confirms that the proposed control framework provides robust stability guarantees within the synthesized funnels. The segmented approach enables online controller refinement while preserving stability through carefully managed transitions, with explicit bounds characterizing the trade-offs between adaptation frequency, excitation magnitude, and convergence rate. The corresponding online procedure is summarized in Algorithm~\ref{alg:online_control}.

\begin{algorithm}[t]
\caption{Online Data-Driven Funnel Control}
\label{alg:online_control}
\begin{algorithmic}[1]
\Require Nominal trajectory; segment length $T$; data window $L$; decay $\alpha\in(0,1)$; per-time $P_{\min}(k),R_{\max}(k)$; bound $\bar\epsilon$; $(\gamma,L_J,L_r,\widetilde T)$; initial $(P_0,K_0)$
\State $i\gets0$, $k_0\gets0$; define $\mathcal{T}_i=\{k_i,\dots,k_i{+}T{-}1\}$; controller $(P_0,K_0)$ is \emph{given}
\While{$k_i<N$}
  \State \textbf{Execute on $\mathcal{T}_i$:} set $\mathcal{T}_i^D=\{k_{i+1}{-}L,\dots,k_{i+1}{-}1\}$
  \For{$k\in\mathcal{T}_i$}
    \State apply $\xi(k)=K_i\eta(k)$; \textbf{if} $k\in\mathcal{T}_i^D$ \textbf{then} set $\xi(k)= K_i\eta(k)+\epsilon(k)$ with $\|\epsilon(k)\|\le\bar\epsilon$
    \State log $(\eta(k),\xi(k),\eta(k{+}1))$
  \EndFor
  \State \textbf{Assemble data at $k_{i+1}$:} form $H_i,H_i^+,\Xi_i$; choose $P_{\min,i}\succeq P_{\min}(k)$, $R_{\max,i}\preceq R_{\max}(k)$ on $\mathcal{T}_i$; compute $\beta_i$; build $\Sigma_i$ and $\Sigma_i^\Delta$
  \State \textbf{Design for next segment $\mathcal{T}_{i+1}$ (before entering it):} solve \eqref{eq:main_sdp} (with $P_{i+1}\preceq \mu P_i$, to get $(P_{i+1},L_{i+1})$; set $K_{i+1}=L_{i+1}P_{i+1}^{-1}$
  \State $k_{i+1}\gets k_i{+}T$;\quad $i\gets i{+}1$;\quad update $\mathcal{T}_i$
\EndWhile
\end{algorithmic}
\end{algorithm}

\section{Case Study}\label{sec:case}
We consider a planar 2-DoF robot arm with standard rigid-body dynamics \cite{Spong2006}. 
Let $x=[q_1,q_2,\dot q_1,\dot q_2]^\top\!\in\mathbb{R}^4$ and $u=[\tau_1,\tau_2]^\top\!\in\mathbb{R}^2$, 
where $q_1$ and $q_2$ are the first (shoulder) and second (elbow) joint angles , 
$\dot q_1$ and $\dot q_2$ are their angular velocities, and 
$\tau_1$ and $\tau_2$ are the actuator torques applied at the two joints.
The continuous-time dynamics are
\[
\dot{x}=\begin{bmatrix}\dot q \\ M(q)^{-1}\big[\tau - C(q,\dot q)\dot q - G(q) - B\dot q\big]\end{bmatrix},
\]
where $M(q)\in\mathbb{R}^{2\times2}\!\succ0$ is the mass matrix, $C(q,\dot q)$ the Coriolis/centrifugal term, $G(q)$ gravity, and $B$ diagonal viscous friction:
\begin{align*}
M(q) &= \begin{bmatrix}
a + 2b\cos q_2 & d + b\cos q_2 \\
d + b\cos q_2 & d
\end{bmatrix},\\
C(q,\dot q) &= \begin{bmatrix}
-2b\sin q_2\,\dot q_2 & -b\sin q_2\,\dot q_2 \\
b\sin q_2\,\dot q_1 & 0
\end{bmatrix}, \quad B = \begin{bmatrix} b_1 & 0 \\ 0 & b_2 \end{bmatrix},\\
G(q) &= \begin{bmatrix}
(m_1l_{c1} + m_2l_1)g\cos q_1 + m_2l_{c2}g\cos(q_1+q_2) \\
m_2l_{c2}g\cos(q_1+q_2)
\end{bmatrix},
\end{align*}
with $a = I_1 + I_2 + m_1 l_{c1}^2 + m_2(l_1^2 + l_{c2}^2)$, $b = m_2 l_1 l_{c2}$, and $d = I_2 + m_2 l_{c2}^2$. Here, $m_i$ are link masses, $l_i$ link lengths, $l_{ci}$ center-of-mass distances, $I_i$ link inertias about the center-of-mass, and $b_i$ viscous friction coefficients. We use the standard parameter set $(m_i,l_i,l_{ci},b_i,I_i)$ with gravity $g=9.81$\,m/s$^2$. Plant/twin parameters (slightly mismatched) are in Table~\ref{tab:parameters}.

\begin{table}[ht]
\centering
\caption{Physical parameters of the 2-DoF arm (plant vs. twin).}
\label{tab:parameters}
\setlength{\tabcolsep}{4pt}
\renewcommand{\arraystretch}{1.05}
\begin{minipage}[t]{0.48\columnwidth}
\centering
\begin{tabular}{lcc}
\toprule
Parameter & Plant & Twin \\
\midrule
$m_1$ (kg)             & 1.00  & 0.95  \\
$l_1$ (m)              & 0.70  & 0.73  \\
$l_{c1}$ (m)           & 0.35  & 0.365 \\
$I_1$ (kg$\cdot$m$^2$) & 0.050 & 0.055 \\
$b_1$ (N$\cdot$m$\cdot$s/rad) & 0.020 & 0.018 \\
\bottomrule
\end{tabular}
\end{minipage}\hfill
\begin{minipage}[t]{0.48\columnwidth}
\centering
\begin{tabular}{lcc}
\toprule
Parameter & Plant & Twin \\
\midrule
$m_2$ (kg)             & 0.80  & 0.84  \\
$l_2$ (m)              & 0.60  & 0.58  \\
$l_{c2}$ (m)           & 0.30  & 0.29  \\
$I_2$ (kg$\cdot$m$^2$) & 0.040 & 0.038 \\
$b_2$ (N$\cdot$m$\cdot$s/rad) & 0.020 & 0.022 \\
\bottomrule
\end{tabular}
\end{minipage}
\end{table}

State/input boxes:
$q_1\!\in[-5,9],\ q_2\!\in[-8,8],\ \dot q_1\!\in[-8,8],\ \dot q_2\!\in[-7,7]$ and $\tau_1,\tau_2\!\in[-40,40]$.
We discretize with RK4 ($\Delta t=0.01$\,s) and simulate $N=600$ steps (6\,s). A twin-based LQR plans a nominal from $\hat x(0)=[0.28,-0.22,0,0]^\top$ to $x_{\mathrm{goal}}=[4.0,-1.0,0,0]^\top$ with $u_{\mathrm{goal}}=[-8.52,-2.37]^\top$ N$\cdot$m.
Numerical constants for synthesis are $\gamma=0.034$ (max one-step plant–twin mismatch along the nominal), $L_J=16.94$ (Jacobian Lipschitz estimate from finite-difference probing of the discrete-time map near the nominal trajectory), and $L_r=0.074$ (second-order remainder constant from finite-difference sampling). We use $C=L_J\,v=3.64$ in the Jacobian-variation bound. We use $\alpha=0.98$ as the Lyapunov decay rate.

We compute time-varying maximum–volume inscribed ellipsoids along the nominal (cf. \cite{Boyd2004}), yielding $P_{\min,i},R_{\max,i}$ used in Lemma~\ref{lem:feasibility_lmis}. At $k=0$:
$P_{\min}^{1/2}(0)=\diag(0.19,\,0.12,\,0.12,\,0.14)$ and
$R_{\max}^{1/2}(0)=\diag(28.56,\,37.62)$.

Segments have length $T=100$ (1\,s); data-window length $L=60$, with excitation bound $\bar\epsilon=0.15$; variation constants $C=L_J v=3.64$ and $\widetilde T=199$. We impose $P_{i+1}\preceq 1.02\,P_i$. The plant starts at $x(0)=[2.28,1.78,1.0,-1.0]^\top$.

The SDP is posed in \texttt{CVXPY} \cite{Diamond2016} and solved with \texttt{SCS}. Figure~\ref{fig:q1234} shows the plant (red) tracking the nominal (black) within the certified funnels (gray) and constraints (green). The baseline—directly applying the twin's nominal controller to the plant (dashed gray)—shows large excursions and repeated violations. The terminal deviation is $\|\eta(N)\|=4.725\times10^{-2}$.

\begin{figure}[t]
  \centering
  \includegraphics[width=1\columnwidth]{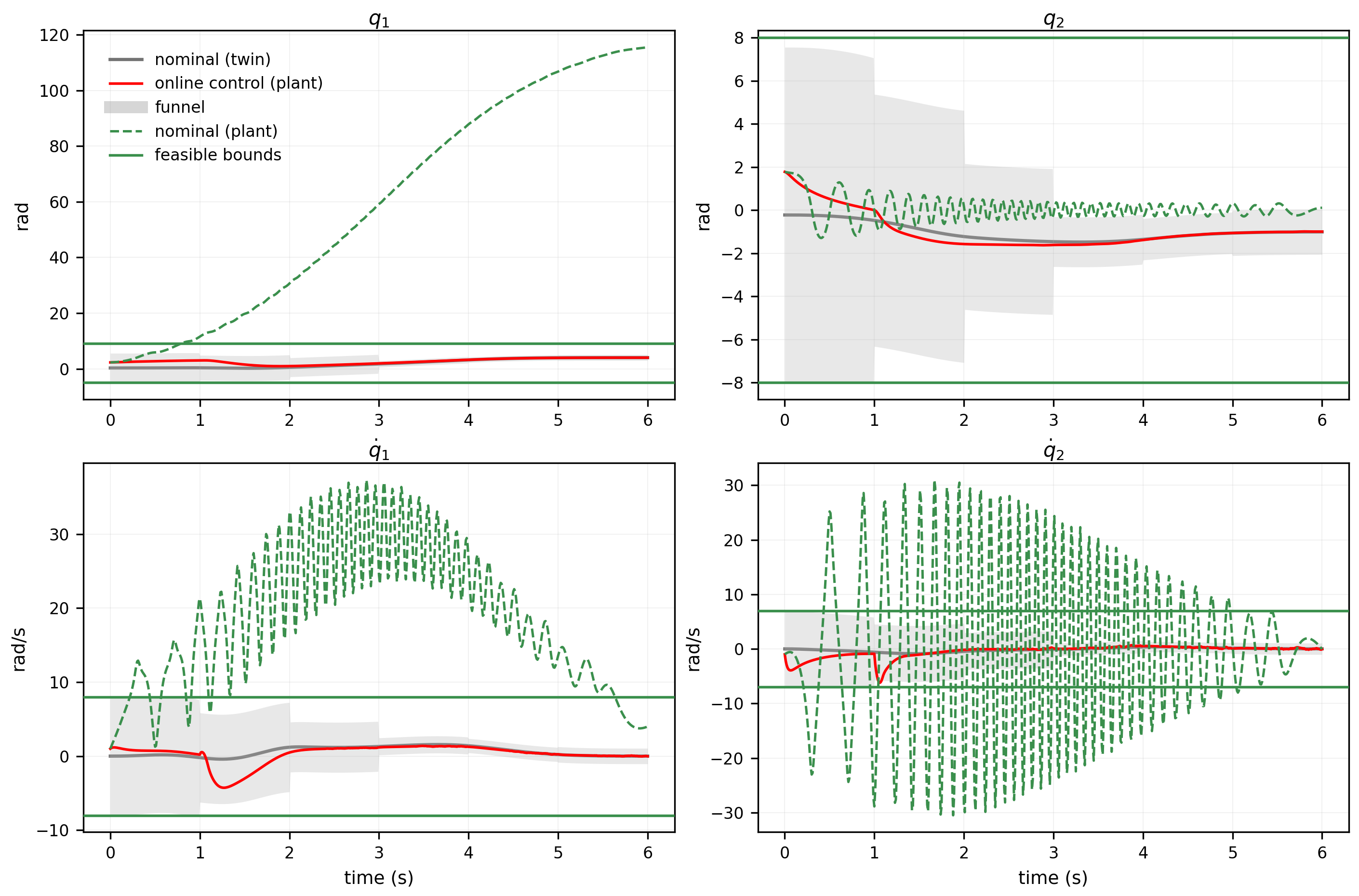}
  \caption{Closed-loop tracking on the 2-DoF arm. The proposed data-driven controller (red) tracks the nominal trajectory computed on the digital twin (black) while remaining within the certified state funnels (gray) and respecting state constraints (green). For comparison, applying the twin controller directly to the plant (dashed gray) leads to larger excursions and constraint violations.}
  \label{fig:q1234}
\end{figure}

\section{Conclusion}\label{sec:conclusion}
This paper proposed a data-driven control framework utilizing the nominal trajectory and control obtained from an imperfect digital twin of a nonlinear system for the purpose of its control with formal safety guarantees. In this direction, plant–twin deviation is first represented via locally LTV dynamics, from which finite trajectories yielded data-consistent uncertainty sets capturing linearization error and Jacobian variations. Subsequently, time-varying quadratic funnels were synthesized over these sets via data-parameterized LMIs, certifying robust positive invariance and constraint satisfaction. The proposed method was implemented online in segments, updating certificates and feedback from newly acquired data. 
Future work includes output-feedback, tighter online and model-free estimation of Lipschitz constants, and less conservative descriptions and parameterization of uncertainty.
\bibliographystyle{IEEEtran}
\bibliography{ref}

\begin{thebibliography}{10}
\providecommand{\url}[1]{#1}
\csname url@samestyle\endcsname
\providecommand{\newblock}{\relax}
\providecommand{\bibinfo}[2]{#2}
\providecommand{\BIBentrySTDinterwordspacing}{\spaceskip=0pt\relax}
\providecommand{\BIBentryALTinterwordstretchfactor}{4}
\providecommand{\BIBentryALTinterwordspacing}{\spaceskip=\fontdimen2\font plus
\BIBentryALTinterwordstretchfactor\fontdimen3\font minus \fontdimen4\font\relax}
\providecommand{\BIBforeignlanguage}[2]{{%
\expandafter\ifx\csname l@#1\endcsname\relax
\typeout{** WARNING: IEEEtran.bst: No hyphenation pattern has been}%
\typeout{** loaded for the language `#1'. Using the pattern for}%
\typeout{** the default language instead.}%
\else
\language=\csname l@#1\endcsname
\fi
#2}}
\providecommand{\BIBdecl}{\relax}
\BIBdecl

\bibitem{Hou2013FromMC}
\BIBentryALTinterwordspacing
Z.~Hou and Z.~Wang, ``From model-based control to data-driven control: Survey, classification and perspective,'' \emph{Inf. Sci.}, vol. 235, pp. 3--35, 2013. [Online]. Available: \url{https://api.semanticscholar.org/CorpusID:347272}
\BIBentrySTDinterwordspacing

\bibitem{Willems2005FundamentalLemma}
J.~C. Willems, P.~Rapisarda, I.~Markovsky, and B.~L.~M. De~Moor, ``A note on persistency of excitation,'' \emph{Systems \& Control Letters}, vol.~54, no.~4, pp. 325--329, Apr. 2005.

\bibitem{Coulson2019DeePC}
\BIBentryALTinterwordspacing
J.~Coulson, J.~Lygeros, and F.~D{\"o}rfler, ``Data-enabled predictive control: In the shallows of the {DeePC},'' \emph{arXiv preprint arXiv:1811.05890}, 2019. [Online]. Available: \url{https://arxiv.org/abs/1811.05890}
\BIBentrySTDinterwordspacing

\bibitem{Berberich2022LTMPCCD}
J.~Berberich, J.~K{\"o}hler, M.~A. M{\"u}ller, and F.~Allg{\"o}wer, ``Linear tracking {MPC} for nonlinear systems---part {II}: The data-driven case,'' \emph{IEEE Transactions on Automatic Control}, vol.~67, no.~9, pp. 4406--4421, Sep. 2022.

\bibitem{Boyd1994LMI}
S.~Boyd, L.~El~Ghaoui, E.~Feron, and V.~Balakrishnan, \emph{Linear Matrix Inequalities in System and Control Theory}, ser. SIAM Studies in Applied Mathematics.\hskip 1em plus 0.5em minus 0.4em\relax SIAM, 1994, vol.~15.

\bibitem{vanWaarde2020MatrixSLemma}
H.~J. van Waarde, M.~K. Camlibel, and M.~Mesbahi, ``From noisy data to feedback controllers: Nonconservative design via a matrix s-lemma,'' \emph{IEEE Transactions on Automatic Control}, vol.~67, no.~1, pp. 162--175, 2022.

\bibitem{Pang2018FiniteOptimal}
B.~Pang, T.~Bian, and Z.-P. Jiang, ``Data-driven finite-horizon optimal control for linear time-varying discrete-time systems,'' in \emph{IEEE Conference on Decision and Control (CDC)}, 2018, pp. 861--866.

\bibitem{Nortmann2020CDC_LTV}
B.~Nortmann and T.~Mylvaganam, ``Data-driven control of linear time-varying systems,'' in \emph{IEEE Conference on Decision and Control (CDC)}, 2020, pp. 3939--3944.

\bibitem{Nortmann2023TAC_LTV}
------, ``Direct data-driven control of linear time-varying systems,'' \emph{IEEE Transactions on Automatic Control}, vol.~68, no.~8, pp. 4888--4895, 2023.

\bibitem{Kamen2010LTV}
E.~Kamen, ``Fundamentals of linear time-varying systems,'' in \emph{The Control Systems Handbook}, 2nd~ed., W.~S. Levine, Ed.\hskip 1em plus 0.5em minus 0.4em\relax CRC Press, 2010.

\bibitem{Reynolds2020TIFS}
T.~P. Reynolds, D.~Malyuta, M.~Mesbahi, and B.~A{\c c}{\i}kme{\c s}e, ``Temporally-interpolated funnel synthesis for nonlinear systems,'' 2020, preprint.

\bibitem{MAYNE2005219}
\BIBentryALTinterwordspacing
D.~Q. Mayne, M.~M. Seron, and S.~V. Rakovi{\'c}, ``Robust model predictive control of constrained linear systems with bounded disturbances,'' \emph{Automatica}, vol.~41, no.~2, pp. 219--224, 2005. [Online]. Available: \url{https://www.sciencedirect.com/science/article/pii/S0005109804002870}
\BIBentrySTDinterwordspacing

\bibitem{RAKOVIC20121631}
\BIBentryALTinterwordspacing
S.~V. Rakovi{\'c}, B.~Kouvaritakis, R.~Findeisen, and M.~Cannon, ``Homothetic tube model predictive control,'' \emph{Automatica}, vol.~48, no.~8, pp. 1631--1638, 2012. [Online]. Available: \url{https://www.sciencedirect.com/science/article/pii/S0005109812001768}
\BIBentrySTDinterwordspacing

\bibitem{Parrilo2003}
P.~A. Parrilo, ``Semidefinite programming relaxations for semialgebraic problems,'' \emph{Mathematical Programming}, vol.~96, no.~2, pp. 293--320, 2003.

\bibitem{Reynolds2020}
T.~P. Reynolds, ``Computational guidance and control for aerospace systems,'' Ph.D. dissertation, University of Washington, Seattle, WA, USA, 2020.

\bibitem{Asch2022-iw}
M.~Asch, \emph{\BIBforeignlanguage{en}{A Toolbox for Digital Twins}}, ser. Math in Industry.\hskip 1em plus 0.5em minus 0.4em\relax New York, NY: Society for Industrial \& Applied Mathematics, Sep. 2022.

\bibitem{Zhang2024-wc}
Y.~Zhang, \emph{\BIBforeignlanguage{en}{Digital twin: Architectures, networks, and applications}}, 1st~ed.\hskip 1em plus 0.5em minus 0.4em\relax Cham, Switzerland: Springer International Publishing, Feb. 2024.

\bibitem{Wickramasinghe2025-us}
N.~Wickramasinghe, N.~Ulapane, and A.~Andargoli, \emph{\BIBforeignlanguage{en}{Digital Twins}}.\hskip 1em plus 0.5em minus 0.4em\relax London, England: CRC Press, Aug. 2025.

\bibitem{Rudin1976PMA}
W.~Rudin, \emph{Principles of Mathematical Analysis}, 3rd~ed.\hskip 1em plus 0.5em minus 0.4em\relax McGraw--Hill, 1976.

\bibitem{Nesterov2004}
Y.~Nesterov, \emph{Introductory Lectures on Convex Optimization: A Basic Course}, ser. Applied Optimization.\hskip 1em plus 0.5em minus 0.4em\relax Boston, MA: Springer, 2004, vol.~87.

\bibitem{NesterovPolyak2006}
Y.~Nesterov and B.~T. Polyak, ``Cubic regularization of newton method and its global performance,'' \emph{Mathematical Programming}, vol. 108, no.~1, pp. 177--205, 2006.

\bibitem{Liu_2023}
\BIBentryALTinterwordspacing
S.~Liu, K.~Chen, and J.~Eising, ``Online data-driven adaptive control for unknown linear time-varying systems,'' in \emph{2023 62nd IEEE Conference on Decision and Control (CDC)}.\hskip 1em plus 0.5em minus 0.4em\relax IEEE, Dec. 2023, p. 8775–8780. [Online]. Available: \url{http://dx.doi.org/10.1109/CDC49753.2023.10383840}
\BIBentrySTDinterwordspacing

\bibitem{DePersis2019}
C.~D. Persis and P.~Tesi, ``On persistency of excitation and formulas for data-driven control,'' in \emph{IEEE Conference on Decision and Control (CDC)}, 2019, pp. 873--878.

\bibitem{Boyd2004}
S.~Boyd and L.~Vandenberghe, \emph{Convex Optimization}.\hskip 1em plus 0.5em minus 0.4em\relax Cambridge, UK: Cambridge University Press, 2004.

\bibitem{vanWaarde2020TAC}
H.~J. van Waarde, J.~Eising, H.~L. Trentelman, and M.~K. Camlibel, ``Data informativity: A new perspective on data-driven analysis and control,'' \emph{IEEE Transactions on Automatic Control}, vol.~65, no.~11, pp. 4753--4768, 2020.

\bibitem{Spong2006}
M.~W. Spong, S.~Hutchinson, and M.~Vidyasagar, \emph{Robot Modeling and Control}.\hskip 1em plus 0.5em minus 0.4em\relax John Wiley \& Sons, 2006.

\bibitem{Diamond2016}
S.~Diamond and S.~Boyd, ``Cvxpy: a python-embedded modeling language for convex optimization,'' \emph{J. Mach. Learn. Res.}, vol.~17, no.~1, p. 2909–2913, Jan. 2016.

\end{thebibliography}
\end{document}